\newcommand{\qpoch}[2]{(#1;q)_{#2}}
\newcommand{\qqpoch}[2]{(#1;q^2)_{#2}}
\newcommand{\Lam}{\Lambda} 
\newcommand{\al}{\alpha}
\newcommand{\be}{\beta}
\newtheorem{corollary}{Corollary}
\newtheorem{theorem}{Theorem}
\newtheorem{definition}{Definition}
\newtheorem{conjecture}{Conjecture}
\theoremstyle{remark}
\newcommand{\conj}{Assuming Conjecture 1 is true.\ }
\numberwithin{equation}{section}
\newcommand{\lf}{\mathcal{L}}
\newcommand{\bmm}{\mathbf{B}}
\newcommand{\lmm}{\mathbf{L}}
\newcommand{\limm}{\mathbf{L}^{-1}}
\newcommand{\uimm}{\mathbf{U}^{-1}}
\newcommand{\llmm}{(\mathbf{L}^{-1}\mathbf{L})}
\newcommand{\lbmm}{(\mathbf{L}^{-1}\mathbf{B})}
\newcommand{\umm}{\mathbf{U}}
\newcommand{\dmm}{\mathbf{D}}
\newcommand{\set}[1]{\left\{#1\right\}}
\newcommand{\R}{\mathcal{R}}
\newcommand{\Z}{\mathbb{Z}}
\newcommand{\mmod}{\mathcal{M}}
\newcommand{\vsub}{\mathcal{V}}
\newcommand{\nP}{\hat{P}}
\newcommand{\nQ}{\hat{Q}}
\newcommand{\st}{\,:\,}    
\newcommand{\thmref}[1]{Theorem \ref{#1}}
\newcommand{\defref}[1]{Definition \ref{#1}}
\newcommand{\figref}[1]{Figure \ref{#1}}
\newcommand{\bra}[1]{\langle{#1}|}
\newcommand{\ket}[1]{|#1\rangle}
\newcommand{\bran}[1]{\langle{#1}}
\newcommand{\db}{\mathbf{d}}
 \newcommand{\eb}{\mathbf{e}}
\renewcommand{\natural}{\prime}
\renewcommand{\sharp}{{\circ}}
\renewcommand{\flat}{{\bullet}}
\begin{document}


\setcounter{page}{1}

\title{Bi-orthogonal Polynomials  and the Five parameter Asymmetric Simple Exclusion Process}

\author{R. Brak\thanks{rb1@unimelb.edu.au}\hspace{1ex}   and W. Moore   
    \vspace{0.15 in} \\
         School of Mathematics,\\
         The University of Melbourne\\
         Parkville,  Victoria 3052,\\
         Australia\\
 }

\maketitle

\begin{abstract} 

We apply the bi-moment determinant method to compute a representation of the matrix product algebra --  a quadratic algebra satisfied by the operators $\db$ and $\eb$ --  for the five parameter ($\alpha$, $\beta$, $\gamma$, $\delta$ and $q$)  Asymmetric Simple Exclusion Process. This method requires an $LDU$ decomposition of the ``bi-moment matrix''. The decomposition defines a new pair of  basis vectors sets, the `boundary basis'.  This basis is defined by the action of polynomials $\set{P_n}$ and $\set{Q_n}$ on the quantum oscillator basis (and its dual). Theses polynomials are orthogonal to themselves (ie.\ each satisfy a three term recurrence relation) and are orthogonal to each other  (with respect to the same linear functional defining the stationary state). Hence termed `bi-orthogonal'.  With respect to the boundary basis the bi-moment matrix is diagonal and the representation of  the operator $\db+\eb$ is tri-diagonal.  This tri-diagonal matrix defines another set of orthogonal polynomials very closely related to the the Askey-Wilson polynomials (they have the same moments).



\end{abstract}

\vfill
\paragraph{Keywords:} Askey-Wilson polynomials, bi-orthogonal polynomials, orthogonal polynomials, totally asymmetric simple exclusion process,  $LDU$-decomposition, diffusion algebra, quadratic algebra.
\newpage

\section{Introduction}\label{sec1}


The ASEP is a continuous time Markov process defined by particles hopping along a line of  $L$ sites -- see \figref{fig:hop}.  Particles hop on  to the line on the left (resp. right) with rates $\alpha$  (resp. $\delta$), off at the right (resp.left) with rate $\beta$ (resp. $\gamma$) and   they hop to neighbouring sites to the left with  rate $q$ and rate one to the right with the constraint that only one particle can occupy a site. 
 \begin{figure}[ht]
\begin{center}
\includegraphics[width=12cm]{./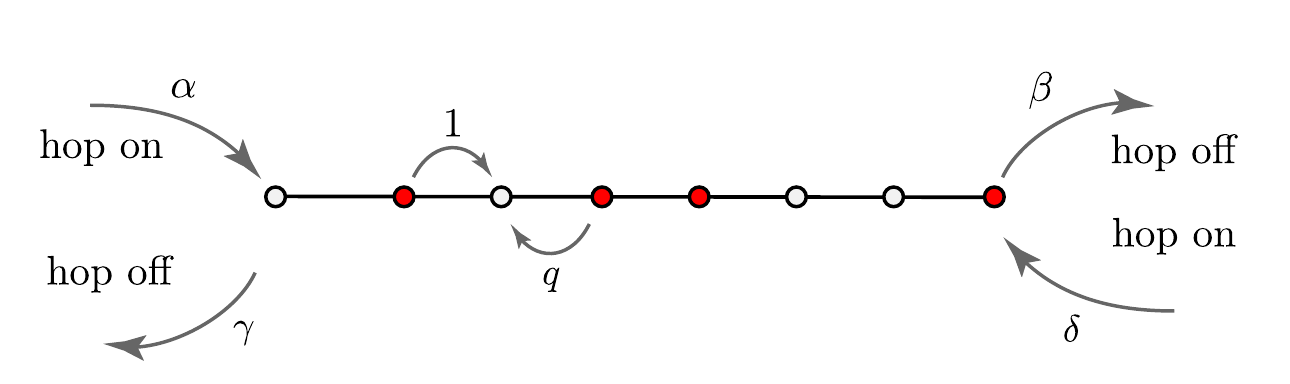}  
\caption{Five parameter ASEP hopping model}\label{fig:hop}
\end{center}
\end{figure}

The matrix product Ansatz \cite{derrida97} expresses the stationary distribution of a given state as an inner product on a certain product of matrices  $D$ and $E$ which satisfy the relation
\[
DE-qED-D-E=0
\]
and requires two vectors $\bra{W}$ and $\ket{V}$ which satisfy 
\begin{subequations}\label{eq_bound}
\begin{align}
     (\beta D - \delta E -1 )\ket{V}&=0\,,\\
    \bra{W}(\alpha E - \gamma D -1)&=0\,.
\end{align}
\end{subequations}
We will call $\bra{W}$ and $ \ket{V}$   the \textbf{boundary vectors}.
The vectors $\bra{W}$ and $\ket{V}$ are used to define a linear functional which  maps any (non-commutative) polynomial, $p(D,E)$ in the matrices $D$ and $E$  to the set of (commuting) polynomials, $\Z[\alpha,\beta,\gamma,\delta,q]$,  via
\begin{equation}
\bra{W} p(D,E)\ket{V}\in \Z[\alpha,\beta,\gamma,\delta,q]\,.
\end{equation}

The  representations  of the matrices $D$ and $E$ fall into three natural cases;
\begin{align*}
    \text{Two parameter:}&\qquad \alpha,\quad\beta ;\qquad q=\gamma=\delta=0\,, \\
    \text{Three parameter:}&\qquad q,\quad \alpha,\quad\beta ;\qquad \gamma=\delta=0\,, \\
        \text{Five parameter:}&\qquad q,\quad\alpha,\quad \beta,\quad \gamma,\quad \delta\,. \\
\end{align*}
The two parameter case is (algebraically) simple. The three parameter case has been studied   in  \cite{brakMore2015}. In this paper we apply the method introduced in \cite{brakMore2015} to the five parameter case. This generalisation is \emph{not} a simple extension of the three parameter case -- several  new difficulties appear. The more important ones are discussed in the Concluding Remarks section. 

For the five parameter case, new parameters are defined from the five hopping parameters, leading to the following change in the set of parameters
\begin{subequations}\label{abcd_def}
\begin{align}
a &= \frac{1}{2\alpha}(1-q-\alpha + \gamma + \sqrt{(1-q-\alpha+\gamma)^2 + 4\alpha \gamma})\,,\\
c &= \frac{1}{2\alpha}(1-q-\alpha + \gamma - \sqrt{(1-q-\alpha+\gamma)^2 + 4\alpha \gamma})\,,\\
b &= \frac{1}{2\beta}(1-q-\beta + \delta + \sqrt{(1-q-\beta+\delta)^2 + 4\beta \delta})\,,\\
d &= \frac{1}{2\beta}(1-q-\beta + \delta - \sqrt{(1-q-\beta+\delta)^2 + 4\beta \delta})\,.
\end{align}
\end{subequations}
This change is motivated by the parameters that occur in the Askey-Wilson polynomials \cite{gasper90} discussed further below.

Rather than using $D$ and $E$  the algebra is  simplified by working with the standard shifted variables,   
\begin{subequations}
\begin{align}\label{eq_shift}
	\db&=q'D-1\,,\\
	\eb&=q'E-1\,,
\end{align}	
\end{subequations}
where $q'=1-q$. In these variables the commutation relations of $\db$ and $\eb$ become \cite{Lazarescu:2014ab},
\begin{subequations}
\begin{align}
    \text{Two parameter:}&\qquad \db \eb=1\,, \\
    \text{Three and Five parameter:}&\qquad \db \eb-q\, \eb \db=q' \,.
    \end{align}
    \end{subequations}
 and \eqref{eq_bound} can be written  in the form, 
 \begin{subequations}
\begin{align}
(\db+ bd \eb -(b+d)\mathbf{1})\ket{V}&=0\,,\label{eq_vbdry}\\
\bra{W}(\eb+ ac\db-(a+c)\mathbf{1})&=0\,.\label{eq_wbdry}
\end{align}\label{eq_bdry}
\end{subequations}

Each matrix representation of $\db$ and $\eb$ is  associated with a basis for the vector space upon which the matrices act. The standard quantum oscillator basis is the set  $\set{\ket{n}\,:\, n\ge0} $.
If the linear operators $\db$ and $\eb$ in the respective cases are defined by their action on the  basis vectors $\ket{n}$ by:
\begin{align*}
    \text{Two parameter:}&\qquad \ket{n+1}=\eb  \ket{n}\,,\\
    &\qquad  \db \ket{n}=\ket{n-1},\quad \db\ket{0}=0\,. \\
    \text{Three and Five parameter:}&\qquad \ket{n+1}=\eb  \ket{n}\,,\\
    &\qquad \db \ket{n}=(1-q^n)\ket{n-1},\quad \db\ket{0}=0\,.
\end{align*}
then it is simple to show  that $\db \eb=1$ (two parameter) and $\db \eb-q\,\eb \db=q'$ (three and five parameter respec.). 
Thus the basis   $\set{\ket{n}}$, in conjunction with the action of $\db$ and $\eb$ above, gives the  standard, \cite{Littlewood:1933aa},  matrix representation for $\db$ and $\eb$ which satisfy the appropriate commutation relations. 
In this  representation the  matrix $\db+\eb$ is   tri-diagonal and for the three and five parameter cases gives a three term recurrence related to  $q$-Hermite polynomials \cite{UCHIYAMA:2003aa}.

To find the vector  $\ket{V}$ (respec.\ $\bra{W}$) there are (at least) two approaches. 
The first is to  express $\ket{V}$ (respec.\  $\bra{W}$) as a linear combination of the standard basis vectors ie.\ $\ket{V}=\sum_{n}a_n \ket{n}$  (respec.\  $\bra{W}=\sum_{n} b_n\bra{n}$), and then compute the  coefficient $a_n$ (respec.\  $b_n$). 
For example, in the three parameter case this leads to 
\begin{equation}
    \ket{V}=a_0\sum_{n\ge 0} \frac{\left(   q'/\beta -1\right)^n}{\prod_{k=1}^{n}(1-q^k)} \ket{n}\,. 
\end{equation}

The second approach (used in this paper) is to find a new pair of bases $\set{\ket{V_n} \st   n\ge0}$ and  $\set{\bra{W_n}\st   n\ge0}$ such at  $\ket{V}= \ket{V_0 } $ and $\bra{W}= \bra{W_0} $. 
This pair of bases are constructed such that the ``bimoment matrix'', $\bmm$, is diagonal. 
The  matrix elements of $\bmm$ are defined by a linear functional $\lf$ (see \defref{thm_cw}) via
\begin{equation}\label{eq_bmmdef}
    \bmm_{n,m}=\lf( \db^n \eb^m )\,.
\end{equation}
We will call $\set{\ket{V_n}}$ and $\set{\bra{W_n}}$ the \textbf{boundary basis}. 
This method of finding a basis (and hence representation) reduces to computing determinants and ultimately to finding an $LDU$ decomposition.

Representations of the $\db$ and $\eb$ matrices for the five parameter model can be found in \cite{UCHIYAMA:2003aa} (and references therein), with one of those representations reproduced in  \eqref{dematrices}. 
If the matrices associated with a given representation  have sufficiently simple structure (eg.\ bi- or tri-diagonal) then they can be usefully interpreted as transfer  matrices  for  lattice path models \cite{Brak2004vf}. This leads to   combinatorial methods for computing  the inner product (or linear functional, $\lf$). 
 
The primary objective of this paper is to the find the change of basis  associated with the five parameter model representation obtained by Uchiyama et. al.\ \cite{UCHIYAMA:2003aa} where   the tri-diagonal matrix $\db+\eb$  gives a three term recurrence related to the Askey-Wilson polynomials.  
As will be shown this change of basis is affected by the action of  sequences of polynomials (with matrix argument) acting on the boundary vectors.

The Askey-Wilson polynomials play  a prominent role in the representation of the $\db$ and $\eb$ matrices. The polynomials also motivate the $a$, $b$, $c$ and $d$ choice of parameters (rather than the hopping rates) defined above and several other choices defined below. We thus briefly discuss the Askey-Wilson polynomials.

The Askey-Wilson polynomials \cite{askeywilson, gasperrahman, koekoekleskyswarttouw} are  `$q$-orthogonal' polynomials  with four parameters, $a,b,c$ and $d$ (and $q$). They are at the top of the Askey-scheme of $q$-orthogonal, one variable polynomials.
The basic hypergeometric functions, ${~}_r \phi_s$, give  a compact expression for the Askey-Wilson polynomial, $W_n(x) = W_n(x;a,b,c,d|q)$, $n>0$, which is given by 
\begin{equation}
W_n(x) = a^{-n}(ab,ac,ad;q)_n\,\, {}_4 \phi_3\left[
\begin{matrix}
    q^{-n},& q^{n-1} abcd, & ae^{i\theta},& a e^{-i\theta}\\ 
    ab,& ac,& ad 
\end{matrix} 
;q,q \right]
\end{equation}
with $x=\cos\theta$  and the  basic hypergeometric function is 
\begin{equation}
{~}_r \phi_s
\left[
\begin{matrix}a_1, & \hdots, & a_r\\
b_1, & \hdots, & b_s
\end{matrix}
;q,z \right] = 
\sum_{k=0}^\infty \frac{(a_1,\hdots,a_r;q)_k}{(b_1,\hdots, b_s,q;q)_k}
\left(  (-1)^k q^{\binom{k}{2}} \right)^{1+s-r}z^k
\end{equation}
where  the $q$-shifted factorial is  
\begin{equation}
(a_1,a_2,\hdots,a_s;q)_n = \prod_{r=1}^s \prod_{k=0}^{n-1} (1-a_r q^k).
\end{equation}
The Askey-Wilson polynomial satisfies a three-term recurrence relation 
\begin{equation}
A_n W_{n+1}(x) + B_n W(x) + C_n W_{n-1}(x) = 2x W_n(x)\,,\label{eq_awrecrel}
\end{equation}
with $W_0(x)=1$, $W_{-1}(x)=0$ and
\begin{multline}
\shoveright{ A_n  = \frac{1 - q^{n-1}abcd}{(1-q^{2n-1}abcd)(1-q^{2n}abcd)}\,,} 
\end{multline}
\begin{multline}
B_n = \frac{q^{n-1}}{(1-q^{2n-2}abcd)(1-q^{2n}abcd)}  [(1 + q^{2n-1}abcd)(qs+abcds')\\
  -q^{n-1}(1+q)abcd(s+qs')]\,,
\end{multline}
\begin{multline}
C_n = \frac{(1-q^n)(1-q^{n-1}ab)(1-q^{n-1}ac)(1-q^{n-1}ad)}{(1-q^{2n-1}abcd)}  \\     \times \frac{(1-q^{n-1}bc)(1-q^{n-1}bd)(1-q^{n-1}cd)}{(1-q^{2n-2}abcd)}
\end{multline}
and
\begin{equation}
s=a+b+c+d, \hspace{1.5cm} s'=a^{-1} + b^{-1} + c^{-1}+d^{-1}.
\end{equation}

Uchiyama et.\ al.\ \cite{UCHIYAMA:2003aa} found a representation of   $\db$  and $\eb$ related to the  Askey-Wilson polynomials. The matrices are tridiagonal and given by
\begin{subequations}

\begin{equation} \label{dematrices}
\db = \left(\begin{matrix}
d_0^\natural & d_0^\sharp & 0 & \hdots \\
d_0^\flat & d_1^\natural & d_1^\sharp  & \\
0 & d_1^\flat & d_2^\natural & \\
\vdots & & & \ddots
\end{matrix}\right)\qquad \text{and}\qquad
\eb = \left(\begin{matrix}
e_0^\natural & e_0^\sharp & 0 & \hdots \\
e_0^\flat & e_1^\natural & e_1^\sharp  & \\
0 & e_1^\flat & e_2^\natural & \\
\vdots & & & \ddots
\end{matrix}\right)
\end{equation}
where
\begin{align}
d_n^\natural =& \frac{q^{n-1}}{(1-q^{2n-2}abcd)(1-q^{2n}abcd)}\notag\\
&[bd(a+c) + (b+d)q - abcd(b+d)q^{n-1} - \{bd(a+c)+abcd(b+d)\}q^n\notag\\
&-bd(a+c)q^{n+1} + ab^2cd^2(a+c)q^{2n-1} + abcd(b+d)q^{2n}]\,,\\
e_n^\natural =& \frac{q^{n-1}}{(1-q^{2n-2}abcd)(1-q^{2n}abcd)}\notag\\
&[ac(b+d) + (a+c)q - abcd(a+c)q^{n-1} - \{ac(b+d)+abcd(a+c)\}q^n\notag\\
&-ac(b+d)q^{n+1} + a^2bc^2d(b+d)q^{2n-1} + abcd(a+c)q^{2n}]\,,\\
\mathcal{A}_n =&  \left[\frac{(1-abcdq^{n-1})(1-q^{n+1})(1-abq^n)(1-bcq^n)(1-adq^n)(1-cdq^n)}
{(1-abcdq^{2n-1})(1-abcdq^{2n})^2(1-abcdq^{2n+1})}\right]^\frac{1}{2}\,,
\intertext{and}
d_n^\sharp &= \frac{1}{1-q^nac}\mathcal{A}_n\,, \qquad \qquad e_n^\sharp  = \frac{-q^nac}{1-q^nac}\mathcal{A}_n\,,\\
d_n^\flat &= \frac{-q^nbd}{1-q^nbd}\mathcal{A}_n\,, \qquad \qquad e_n^\flat  = \frac{1}{1-q^nbd}\mathcal{A}_n\,.
\end{align}\label{eq_awparam}
\end{subequations}
We have introduced the parameters, $d_n^\sharp$, $d_n^\flat$, $e_n^\sharp$ and $e_n^\flat$ as they will reoccur in computations in the rest of this paper.


\section{The Linear Functional}
\label{sec_mods}


In this section we set up the tensor algebra used to represent the ASEP \cite{Crampe:2014ab}. Let $\R$ be the ring of integer coefficient commutative polynomials, $\Z[\al,\be,\gamma,\delta,q]$ and $\mmod$   the  $\R$-module (or tensor algebra)
\begin{equation}\label{eq_mmod}
\mmod= \bigoplus_{n\ge0}  \vsub_2^{\otimes n}
\end{equation}
where $\vsub_2$ is a free rank two $\R$-module with generators $ \db$ and $\eb$. Here $\vsub_2^{\otimes 0}$ denotes the ring $\R$  of the module and $\vsub_2^{\otimes n}=\vsub_2\otimes\vsub_2\otimes\cdots\otimes\vsub_2$ ($n$ factors). The homogeneous submodule $\vsub_2^{\otimes n}$, of degree $n$, is generated by the  standard  monomial basis elements $e_{i_1} \otimes  e_{i_2} \cdots  \otimes e_{i_n} $ where $e_i\in\set{\db,\eb}$. 
For brevity  we will frequently omit the tensor product symbol, thus $\db^m \eb^n$ denotes  $\db^{ \otimes m}\otimes \eb^{ \otimes n}$ etc. 

We use the  five parameter version of the  original matrix Ansatz algebra equations of Derrida et.\ al.\ \cite{derrida97} as modified in \cite{corteel:2010rt}. The latter form  allows for arbitrary monomial pre- and post-factors ($u$ and $v$ in the equations below). 
The original algebra (in \cite{derrida97}) was stated   in terms of matrices and vectors. 
Here we give a slightly more abstract version by using  a  linear functional in terms of   $\db$ and $\eb$.
\begin{definition} \label{thm_cw}

Let $u,v$ be any monomial basis elements of $\mmod$. 
The \mbox{$\R$-module}  homomorphism  $\lf :\,\mmod\to \R$ is defined by the following equations:
\begin{subequations}\label{eq_cw}
\begin{align}
 \lf(u\otimes (\db \otimes  \eb - q\, \eb \otimes  \db -q' )\otimes v)&= 0\label{eq_cw_a}\,,\\
  \lf(u \otimes (\db +bd \eb -(b+d)\mathbf{1}) )&=0\label{eq_cw_b}\,,\\
 \lf((\eb +ac \db -(a+c) \mathbf{1} )\otimes v )&=0\,,\label{eq_cw_c}
\end{align}
\end{subequations}
where $a,b,c$ and $d$ are defined in \eqref{abcd_def},  $\lf(1)=1$  and extended linearly to other elements of $\mmod$.

\end{definition}

The matrix product Ansatz of  \cite{derrida97}  for the stationary state can now be (trivially) restated   using  the   linear functional $\lf$.
\begin{theorem}[ Derrida, Evans,  Hakim  and  Pasquier \cite{derrida97}]
The stationary state probability distribution, $f(\tau)$, of the five parameter ASEP for the system in state $\tau=(\tau_1,\cdots,\tau_L)$, is given by
\begin{align}
f(\tau) =&\frac{1}{Z_L}\, \lf\left(\prod_{i=1}^L(\tau_i \db+(1-\tau_i)\eb)\right)\, \label{eq_prob}
\intertext{where}
Z_L  =&\lf\bigl((\db+ \eb)^L\bigr)\,\label{norm_eq}
\end{align}
and $\tau_i=1$ if site $i$ is occupied and zero otherwise.
\end{theorem}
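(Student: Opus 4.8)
The plan is to verify directly that the unnormalised matrix-product weights are annihilated by the Markov generator of the process; this is the classical ``divergence'' (or ``hat-matrix'') argument of Derrida et al., recast in the language of the functional $\lf$. Write the generator as $\mathcal{M}=\sum_{i=1}^{L-1}h_{i,i+1}+b_1+b_L$, where $h_{i,i+1}$ is the bulk two-site hopping operator (rate $1$ to the right, rate $q$ to the left) and $b_1,b_L$ are the boundary operators carrying the rates $\alpha,\gamma$ and $\beta,\delta$ respectively. Stationarity is the assertion $\mathcal{M}f=0$, so it suffices to show that the vector of unnormalised weights $w(\tau)=\lf\bigl(\prod_{i=1}^L(\tau_i\db+(1-\tau_i)\eb)\bigr)$ lies in the kernel of $\mathcal{M}$; dividing by $Z_L$ of \eqref{norm_eq} then turns $w$ into a probability distribution. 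Throughout I would pass freely between $\db,\eb$ and the original $D,E$ via the shift $\db=q'D-\mathbf{1}$, $\eb=q'E-\mathbf{1}$, under which the defining relations \eqref{eq_cw_a}, \eqref{eq_vbdry}, \eqref{eq_wbdry} of $\lf$ become exactly the unshifted bulk relation $DE-qED=D+E$ and the boundary conditions \eqref{eq_bound}.

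First I would establish the bulk telescoping identity. Collecting the empty/occupied operators into a two-component column $X=\bigl(\begin{smallmatrix}E\\ D\end{smallmatrix}\bigr)$, the action of $h_{i,i+1}$ on the adjacent pair $X\otimes X$ produces, on the two configurations it changes, precisely $\pm(DE-qED-D-E)$, which \emph{vanishes as an operator identity} only after insertion into $\lf$ via \eqref{eq_cw_a}. The key observation is that this same combination can be written as a discrete difference,
\[
h_{i,i+1}(X\otimes X)=X\otimes\hat{X}-\hat{X}\otimes X,\qquad
\hat{X}=\begin{pmatrix}-\mathbf{1}\\ \mathbf{1}\end{pmatrix},
\]
for the auxiliary operators $\hat{E}=-\mathbf{1}$, $\hat{D}=\mathbf{1}$, as is checked by matching the four tensor components against $D+E=DE-qED$. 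Summing this over the bonds $i=1,\dots,L-1$ inside the matrix product telescopes: all interior terms cancel in pairs, leaving only the two boundary remnants, one carrying $\hat{X}$ at site $1$ and one carrying $\hat{X}$ at site $L$. Because $\hat X$ is scalar-valued, these remnants reduce to $(2\tau_L-1)\,w(\tau_1,\dots,\tau_{L-1})$ and $-(2\tau_1-1)\,w(\tau_2,\dots,\tau_L)$ in the evident notation.

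Next I would dispose of the boundary operators. Applying $b_1$ to the single factor $X_1$ gives, on reading off the $\tau_1$-component, the quantity $(2\tau_1-1)(\alpha E-\gamma D)$, which under $\bra{W}$ collapses via \eqref{eq_wbdry} (equivalently $\bra{W}(\alpha E-\gamma D)=\bra{W}$) to $(2\tau_1-1)\,w(\tau_2,\dots,\tau_L)$; this cancels the left telescoped remnant exactly. Symmetrically, $b_L$ acting on $X_L$ yields $(2\tau_L-1)(\delta E-\beta D)$, which under $\ket{V}$ collapses via \eqref{eq_vbdry} (equivalently $(\delta E-\beta D)\ket{V}=-\ket{V}$) to $-(2\tau_L-1)\,w(\tau_1,\dots,\tau_{L-1})$, cancelling the right remnant. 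Hence $\mathcal{M}w=0$ termwise in $\tau$, so $w$ is proportional to the unique stationary measure, giving \eqref{eq_prob}.

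The main obstacle is not any single computation but the coordinated bookkeeping: one must construct the auxiliary operators $\hat D,\hat E$, verify the telescoping identity in all four tensor components, and then confirm that the two boundary reductions produce \emph{precisely} the scalars $\pm(2\tau_i-1)$ needed to annihilate the telescoped bulk terms. A secondary point requiring care is the translation between the $\db,\eb$ form of the statement and the $D,E$ computation through the shift $\db=q'D-\mathbf{1}$, $\eb=q'E-\mathbf{1}$, so that the functional-language formulas for $f(\tau)$ and $Z_L$ match the operator computation once the common powers of $q'$ are absorbed into the normalisation.
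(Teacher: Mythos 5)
Your cancellation argument is, in substance, the right one --- and it is worth noting that the paper itself offers no proof at this point: it presents the theorem as a ``(trivially) restated'' form of the matrix product Ansatz and cites \cite{derrida97}, so the entire burden is carried by exactly the DEHP telescoping mechanism you reconstruct. Your bookkeeping checks out: with $\hat{E}=-\mathbf{1}$, $\hat{D}=\mathbf{1}$, the bond contribution at a pair $(\tau_i,\tau_{i+1})=(0,1)$ is the insertion of $DE-qED=D+E$ and at $(1,0)$ its negative, matching the components of $X\otimes\hat{X}-\hat{X}\otimes X$; the telescoped remnants $(2\tau_L-1)\,w(\tau_1,\dots,\tau_{L-1})$ and $-(2\tau_1-1)\,w(\tau_2,\dots,\tau_L)$ are then killed by the boundary conditions \eqref{eq_bound} exactly as you describe. (You should still invoke irreducibility of the finite chain to conclude uniqueness of the stationary measure, and positivity of $Z_L$, but both are routine.)

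The genuine gap is in your final translation paragraph. The shift \eqref{eq_shift} is affine, not linear, so the powers of $q'$ do \emph{not} ``absorb into the normalisation'': since $\tau_i\db+(1-\tau_i)\eb=q'\bigl(\tau_i D+(1-\tau_i)E\bigr)-\mathbf{1}$, the product $\prod_i\bigl(\tau_i\db+(1-\tau_i)\eb\bigr)$ differs from $q'^{\,L}\prod_i\bigl(\tau_i D+(1-\tau_i)E\bigr)$ by all the cross terms generated by the $-\mathbf{1}$. What your telescoping argument actually proves is $f(\tau)=Z_L^{-1}\bra{W}\prod_i(\tau_i D+(1-\tau_i)E)\ket{V}$, whose faithful restatement through $\lf$ is $f(\tau)\propto\lf\bigl(\prod_i(\mathbf{1}+\tau_i\db+(1-\tau_i)\eb)\bigr)$ with $Z_L\propto\lf\bigl((2\cdot\mathbf{1}+\db+\eb)^L\bigr)$, not \eqref{eq_prob}--\eqref{norm_eq} read literally. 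Indeed the literal form fails already at $L=1$ for the TASEP point $q=\gamma=\delta=0$, $\alpha=1$: there $a=c=0$, so \eqref{eq_cw_c} forces $\lf(\eb)=0$ and \eqref{eq_prob} would give $f(0)=0$, whereas the true stationary value is $f(0)=\beta/(\alpha+\beta)>0$. To be fair, this discrepancy originates in the paper's own statement (which silently conflates the shifted and unshifted generators) rather than in your mechanism; but since you explicitly claim the passage between the two forms is a matter of common $q'$ factors, your proof of the statement as posed is incomplete at precisely this point --- you should either prove the corrected $\mathbf{1}+\db$, $\mathbf{1}+\eb$ form, or flag that \eqref{eq_prob} requires it.
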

%

\section{Bi-Orthogonal Pair of Polynomial  Sequences}
\label{biops}

In this section we define a pair of polynomials sequences. These polynomials are then used to construct the boundary basis which leads to a matrix representation of $\db$ and $\eb$.

Consider the pair of sequences, 
\begin{equation}
 \set{P_n(\db)}_{n\ge0} \qquad \text{and }\qquad  \set{Q_n(\eb)}_{n\ge0} \label{eq_polypair}
\end{equation}
of  monic polynomials where $P_n$ and $Q_n$ are degree $n$. We wish to determine if it is possible to find such a pair which are  orthogonal with respect to $\lf$ (as defined in \defref{thm_cw}), that is  $\lf(P_n  Q_m)=\Lam_n\delta_{n,m}$, $\Lam_n\ne 0$? 

In order to show such a pair of sequences does indeed exist we    
consider  the   infinite dimensional `bimoment matrix', $ \bmm $,  whose matrix elements are defined to be 
\begin{equation}\label{binom}
 \bmm_{n,m} =\lf(\db^{ n}\,    \eb^{ m})\,, \qquad n,m\ge 0\,.
\end{equation}
The bimoment  matrix elements  satisfy a pair of partial difference equations as  given in the following theorem. 
\begin{theorem}\label{bimompascal}
The bimoment matrix elements, \eqref{binom}, satisfy the recursions 
\begin{subequations}\label{derecs}
\begin{align}
\bmm_{i,j} &= (1-q^{i}) \bmm_{i-1,j-1} + (a+c)q^i \bmm_{i,j-1} - acq^i \bmm_{i+1,j-1} \label{erec}\,,\\
\bmm_{i,j} &= (1-q^{j}) \bmm_{i-1,j-1} + (b+d)q^j \bmm_{i-1,j} - bdq^j \bmm_{i-1,j+1} \label{drec}\,,
\end{align}	
\end{subequations}
for $i,j>0$ with boundary values $\bmm_{0,j} \textrm{ and } \bmm_{i,0}$, $i,j\ge0$ satisfying
\begin{subequations}\label{boundrecs}
\begin{align}
\bmm_{i,0} &= \frac{((b+d - bd(a+c)q^{i-1}) \bmm_{i-1,0} - bd(1-q^{i-1})\bmm_{i-2,0}}{1-abcdq^{i-1}} \label{lbound} \,,\\ 	
\bmm_{0,j} &= \frac{((a+c - ac(b+d)q^{j-1}) \bmm_{0,j-1} - ac(1-q^{j-1})\bmm_{0,j-2}}{1-abcdq^{j-1}} \label{ubound}
\end{align}
and $\bmm_{0,0}$ =1.  
\end{subequations}
\end{theorem}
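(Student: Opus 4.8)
The plan is to derive all four recursions by pure rewriting inside the linear functional $\lf$, using only the three defining relations \eqref{eq_cw_a}, \eqref{eq_cw_b}, \eqref{eq_cw_c}. The engine of the whole argument is a pair of ``straightening'' identities obtained by iterating the $q$-commutation relation. First I would establish, by induction on $i$ from \eqref{eq_cw_a} in the rewriting form $\db\eb = q\,\eb\db + q'$ (valid inside $\lf$ with arbitrary pre- and post-factors), the identity
\[
\db^i\eb = q^i\,\eb\db^i + (1-q^i)\,\db^{i-1},
\]
together with its mirror image $\db\,\eb^j = q^j\,\eb^j\db + (1-q^j)\,\eb^{j-1}$. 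The telescoping of the lower-order terms produces the geometric sum $q'(1+q+\cdots+q^{i-1})$, which collapses to $1-q^i$ precisely because $q'=1-q$.

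For \eqref{erec} I would write $\bmm_{i,j}=\lf(\db^i\eb\cdot\eb^{j-1})$ and apply the first straightening identity: the lower-order term gives $(1-q^i)\bmm_{i-1,j-1}$, while the leading term $q^i\lf(\eb\db^i\eb^{j-1})$ is reduced by the left-boundary relation \eqref{eq_cw_c}, namely $\lf(\eb v)=(a+c)\lf(v)-ac\,\lf(\db v)$ with $v=\db^i\eb^{j-1}$, yielding the remaining $(a+c)q^i\bmm_{i,j-1}-ac\,q^i\bmm_{i+1,j-1}$. Equation \eqref{drec} is the mirror computation: write $\bmm_{i,j}=\lf(\db^{i-1}\cdot\db\eb^j)$, apply the mirror straightening identity, and finish with the right-boundary relation \eqref{eq_cw_b}, $\lf(u\db)=(b+d)\lf(u)-bd\,\lf(u\eb)$.

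The boundary recursions \eqref{lbound} and \eqref{ubound} require combining both boundary relations, and this is where the only genuine subtlety appears. For \eqref{lbound} I would start from $\bmm_{i,0}=\lf(\db^{i-1}\db)$, apply the right-boundary relation to expose $\lf(\db^{i-1}\eb)=\bmm_{i-1,1}$, straighten this via the first identity, and then apply the left-boundary relation to the resulting $\lf(\eb\db^{i-1})$. The key point is that this last step regenerates a term $abcd\,q^{i-1}\bmm_{i,0}$ proportional to the quantity being computed; transposing it to the left-hand side is exactly what produces the denominator $1-abcd\,q^{i-1}$ and the stated numerator. Equation \eqref{ubound} follows by the symmetric computation, and $\bmm_{0,0}=\lf(\mathbf{1})=1$ is immediate.

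I expect the main obstacle to be bookkeeping rather than anything conceptual: correctly tracking the self-referential $\bmm_{i,0}$ (respectively $\bmm_{0,j}$) term through the double application of the boundary relations, and ensuring the coefficients $b+d-bd(a+c)q^{i-1}$ and $bd(1-q^{i-1})$ assemble correctly after collecting. The one place where a sign or a power of $q$ can easily go astray is the interaction between the $q^{i-1}$ coming from the straightening step and the $ac$, $bd$ factors coming from the two boundary relations.
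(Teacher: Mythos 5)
Your proposal is correct and follows essentially the same route as the paper: the straightening identities $\db^i\eb = q^i\,\eb\db^i + (1-q^i)\db^{i-1}$ and its mirror are exactly the paper's ``commute $\eb$ to the left / $\db$ to the right'' step, combined with the boundary relations \eqref{eq_cw_b} and \eqref{eq_cw_c} to yield \eqref{erec}--\eqref{drec}, and your self-referential derivation of \eqref{lbound}--\eqref{ubound} (transposing the $abcd\,q^{i-1}\bmm_{i,0}$ term to produce the denominator $1-abcd\,q^{i-1}$) is algebraically the same computation as the paper's equating of the rearranged boundary relation with the $m=1$ (resp.\ $n=1$) case of \eqref{erec} (resp.\ \eqref{drec}). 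The only cosmetic difference is that you never formally divide by $bd$ or $ac$ as the paper's intermediate equations do, which is if anything slightly cleaner.
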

Note, the bimoment matrix elements satisfy \emph{both} \eqref{erec} and \eqref{drec}, however, to generate the matrix elements it is sufficient to use only one of \eqref{erec} or \eqref{drec} together with both boundary recurrences. The reason both \eqref{erec} and \eqref{drec} are stated is that it makes explicit a symmetry of the matrix which we will use below.
\begin{proof}
The idea of the proof is similar to \cite{brakMore2015}. However, eliminating an $\eb$ (resp. $\db$) from the left (resp. right) side of is more complicated due to the more complicated boundary vector equations \eqref{eq_bdry}. Using \eqref{eq_cw_c} (resp. \eqref{eq_cw_b}), an $\eb$ (resp. $\db$) on the left (resp. right) side of a monomial can be removed giving,
\begin{align}
\lf (\eb \db^n\eb^{m-1}) &=  (a+c) \lf (\db^n\eb^{m-1}) - ac \lf(\db^{n+1}\eb^{m-1})\,,\label{left_e}\\
\lf (\db^{n-1}\eb^m\db) &=  (b+d) \lf (\db^{n-1}\eb^m) - bd \lf(\db^{n-1}\eb^{m+1})\,.\label{right_d}
\end{align}
Similar to the proof in \cite{brakMore2015}, commuting an $\eb$ all the way to the left and then eliminating the $\eb$ on the left using \eqref{left_e} gives the following recurrence \eqref{erec},
\begin{align}
\lf(\db^n \eb^m) &=(1-q^n) \lf(\db^{n-1} \eb^{m-1}) + q^{n}\lf(\eb \db^n\eb^{m-1}) \notag\\ 	
	    &=(1-q^n) \lf(\db^{n-1} \eb^{m-1}) + (a+c)q^n \lf(\db^n\eb^{m-1})\notag \\
	    &\qquad - ac q^n \lf(\db^{n+1}\eb^{m-1})\,.
\end{align}
Commuting a $\db$ all the way to the right and then eliminating it using \eqref{right_d} gives the following recurrence \eqref{drec},
%
\begin{align}
\lf(\db^n \eb^m) &=(1-q^m) \lf(\db^{n-1} \eb^{m-1}) + q^{m}\lf(\db^{n-1}\eb^m \db) \notag\\ 	
	    &=(1-q^m) \lf(\db^{n-1} \eb^{m-1}) + (b+d)q^m \lf(\db^{n-1}\eb^m)\notag\\
	    & \qquad - bd q^m \lf(\db^{n-1}\eb^{m+1})\,.
\end{align}
By rearranging \eqref{eq_cw_b} and \eqref{eq_cw_c}, an $\eb$ can be eliminated from the left and a $\db$ eliminated from the right, as distinct from the 3-parameter case, giving,
\begin{align}
\lf (\db^n\eb) &= \frac{1}{bd} \left((b+d)\lf (\db^n) - \lf(\db^{n+1})\right)\,,\\
\lf (\db\eb^m) &= \frac{1}{ac} \left((a+c)\lf (\eb^m) - \lf(\eb^{m+1})\right)\,. 
\end{align}
Finally, using the recursions \eqref{erec} and \eqref{drec} for $m=1$ and $n=1$ gives the following expression in terms of the boundary values,	
\begin{align}
\lf(\db^n \eb) &=(1-q^n) \lf(\db^{n-1}) + (a+c)q^n\lf(\db^n) - ac q^n \lf(\db^{n+1})\,,\\
\lf(\db \eb^m) &=(1-q^m) \lf(\eb^{m-1}) + (b+d)q^m\lf(\eb^m) - bd q^m \lf(\eb^{m+1})\,. 
\end{align}
Combining these results gives a recurrence for the boundary value terms.	
\end{proof}
The existence of the pair of polynomial sequences  \eqref{eq_polypair}   requires that the determinant of the $(n+1)\times (n+1) $ sub-matrix  
$$\bmm^{(n)}=(\bmm_{i,j})_{0\le i,j\le n}$$ 
be non-zero for all $n\ge 0$ (see \cite{brakMore2015} for further details).

In the case of the three parameter model the corresponding determinant was evaluated using theorems from \cite{Krattenthaler:2002aa} and \cite{A.R.-Moghaddamfar:2008aa}. In this five parameter case we have been unable to find any similar theorems and thus attempted an $LDU$ decomposition directly.

For small values of $n$ the determinant,  $\det{\bmm^{(n)}}$ can be found by computer by iterating the recurrence relations \eqref{erec} to construct $\bmm$. 
From these values a product form for $\det{\bmm^{(n)}}$ (stated below in \eqref{eq_bidet}) can be conjectured. 
It is similarly possible to conjecture the $LDU$ decomposition of  $\bmm$, that is, find   upper and lower triangular matrices, $\umm$ and $\lmm$ respectively, such that 
\begin{equation}
    \bmm  = \lmm \mathbf{D} \umm \label{eq_lduf}
\end{equation}
(with $\mathbf{D}$ diagonal). The product of the first $n$ diagonal elements of  $\mathbf{D}$ then gives the determinants, $\det{\bmm^{(n)}}$.  
These small $n$ computations lead us to define a lower triangular matrix $\lmm$ via a recurrence relation for the matrix elements $\lmm_{i,j}$ given by
\begin{align} \label{L}
\lmm_{i,j} &=
\begin{cases}
 \lmm_{i-1,j-1} + d_j^{\natural} \lmm_{i-1,j} - bdq^jg_j \lmm_{i-1,j+1} \qquad &\text{for $i,j \geq 0$ and $i\geq j$\,,} \\
1 \qquad &\text{for $i=j=0$\,,} \\
0 \qquad &\text{otherwise\,,}
\end{cases}
\end{align} 
where 
\begin{align}
g_j &= \frac{(1-abcdq^{j-1})(1-q^{j+1})(1-abq^j)(1-bcq^j)(1-adq^j)(1-cdq^j)}{(1-abcdq^{2j-1})(1-abcdq^{2j})^2(1-abcdq^{2j+1})}\,. \label{eq_gi}
\end{align}	
For small values of $n$ this matrix (and a similar one for $\umm$) give  the $LDU$ decomposition of $\bmm$, but unfortunately we have not been able to prove the decomposition for arbitrary $n$. Thus we make the following conjecture.
\begin{conjecture} \label{eq_conj}
The bimoment matrix, \eqref{binom}, has an LDU decomposition with lower triangular matrix, $\lmm$, given by \eqref{L}.
\end{conjecture}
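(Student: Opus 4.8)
\emph{A strategy towards Conjecture \ref{eq_conj}.} The plan is to recast the factorisation as a bi-orthogonality statement and then pin $\lmm\dmm\umm$ down by the uniqueness already noted after \thmref{bimompascal}. Since the recurrence \eqref{L} forces $\lmm_{i,i}=1$, the matrix $\lmm$ is unit lower triangular, hence invertible. Let $\umm$ be the unit upper triangular matrix produced by \eqref{L} under the involution $\db\leftrightarrow\eb$, $(a,c)\leftrightarrow(b,d)$ and transposition (its existence mirrors the symmetry between \eqref{erec} and \eqref{drec}), and let $\dmm$ be diagonal with $\dmm_{n,n}=\Lam_n=\det\bmm^{(n)}/\det\bmm^{(n-1)}$. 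Writing the rows of $\limm$ and the columns of $\uimm$ as the coefficient vectors of monic polynomials $P_n(\db)=\sum_j(\limm)_{n,j}\db^j$ and $Q_m(\eb)=\sum_l(\uimm)_{l,m}\eb^l$, the claim $\bmm=\lmm\dmm\umm$ is exactly $\limm\bmm\uimm=\dmm$, i.e.\ the bi-orthogonality $\lf(P_nQ_m)=\Lam_n\delta_{n,m}$. So it suffices to produce $P_n$, $Q_m$ and prove they are bi-orthogonal with respect to $\lf$.

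First I would extract the polynomials directly from the conjectured $L$-factor. Inverting $\db^i=\sum_n\lmm_{i,n}P_n(\db)$ and matching coefficients against \eqref{L} shows that \eqref{L} is equivalent to the three-term recurrence $\db P_n=P_{n+1}+d_n^\natural P_n-bd\,q^{n-1}g_{n-1}P_{n-1}$ with $P_0=1$, $P_{-1}=0$, and dually $\eb Q_m=Q_{m+1}+e_m^\natural Q_m-ac\,q^{m-1}g_{m-1}Q_{m-1}$ (here $g_j$ of \eqref{eq_gi} is invariant under $(a,c)\leftrightarrow(b,d)$, so the same coefficient serves both families). That the diagonal coefficients are precisely the Uchiyama entries $d_n^\natural,e_m^\natural$ of \eqref{dematrices} is the structural reason the boundary basis tridiagonalises $\db$ and $\eb$, and is a strong internal check that the conjectured $\lmm$ is the correct one.

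The core step is to prove $\lf(P_nQ_m)=\Lam_n\delta_{n,m}$. I would set $\hat\bmm=\lmm\dmm\umm$ and verify that it satisfies the characterising recurrences of \thmref{bimompascal}, whence $\hat\bmm=\bmm$. Because row $i$ of $\lmm$ equals row $i-1$ times the tridiagonal operator $\mathbf{M}$ implementing \eqref{L} (with row $0$ equal to $(1,0,0,\dots)$), and \eqref{drec} similarly expresses $\bmm_{i,\cdot}=\bmm_{i-1,\cdot}\mathbf{N}$ with $\mathbf{N}$ the tridiagonal operator carrying coefficients $(1-q^j),(b+d)q^j,-bd\,q^j$, it is enough to check the single banded intertwining identity $\mathbf{M}\,\dmm\umm=\dmm\umm\,\mathbf{N}$ on columns $j\ge1$ (which forces \eqref{drec} for $\hat\bmm$), together with the boundary recurrences \eqref{lbound}, \eqref{ubound} and $\bmm_{0,0}=1$. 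An equivalent and perhaps more transparent route is a double induction on $(n,m)$ using the elimination identities \eqref{left_e}, \eqref{right_d} and the defining relations \eqref{eq_cw} to move a $\db$ past $Q_m(\eb)$ and an $\eb$ past $P_n(\db)$ inside $\lf$, collapsing the off-diagonal terms.

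The hard part will be the explicit verification of the intertwining identity. Unlike the three-parameter case, $\bmm$ is not a Hankel matrix, so classical moment-matrix theory does not apply directly, and the quantities $g_j$, $d_n^\natural$, $e_m^\natural$ are rational in $q,a,b,c,d$ with products of $q$-shifted factorials in both numerator and denominator. Reducing $\mathbf{M}\,\dmm\umm=\dmm\umm\,\mathbf{N}$ to scalar equations yields, band by band, rational-function identities in the Askey--Wilson parameters whose proof is essentially a $q$-contiguous-relation computation; this is exactly the step for which no off-the-shelf determinant-evaluation theorem of the type used in \cite{Krattenthaler:2002aa,A.R.-Moghaddamfar:2008aa} appears to be available, and is presumably why a complete proof has so far been elusive. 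A fallback is to establish the product formula \eqref{eq_bidet} for $\det\bmm^{(n)}$ first and then recover $\lmm_{i,j}$ as the bordered-minor ratio $\det\bmm\big(\{0,\dots,j-1,i\}\,\big|\,\{0,\dots,j\}\big)/\det\bmm^{(j)}$, turning \eqref{L} into a Desnanot--Jacobi (Sylvester) relation among minors fed by \eqref{erec}, \eqref{drec}; but this merely relocates the same $q$-identity obstacle into the determinant evaluation.
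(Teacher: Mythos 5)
First, note the ground truth: the paper does \emph{not} prove this statement. It is stated precisely as a conjecture, supported only by small-$n$ computer verification of the $LDU$ factors, plus the a posteriori observation (in the remark following the conjecture and in the proof of \thmref{tm_matprod}) that the representation it ultimately yields can be checked directly against the algebra of \defref{thm_cw}, ``assuming the logic of the calculation can be reversed.'' Your proposal is, in effect, a concrete program for that reversal, and its formal skeleton is sound: $\lmm_{i,i}=1$ does follow from \eqref{L}; the equivalence of \eqref{L} with the three-term recurrence $\db P_n=P_{n+1}+d_n^\natural P_n-bd\,q^{n-1}g_{n-1}P_{n-1}$ is correct (it is the paper's Theorems \ref{lem_polrec}--\ref{tm_biorthpoys} run in reverse, since those results derive the recurrence \emph{from} the conjecture); the invariance of $g_j$ in \eqref{eq_gi} under $(a,c)\leftrightarrow(b,d)$ checks out; and the reduction of $\bmm=\lmm\dmm\umm$ to the intertwining identity $\mathbf{M}\,\dmm\umm=\dmm\umm\,\mathbf{N}$ together with the boundary recurrences \eqref{lbound}, \eqref{ubound} is a legitimate sufficiency argument, because \eqref{drec} plus the boundary data determines $\bmm$ uniquely (as the paper notes after \thmref{bimompascal}).

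The genuine gap is the one you yourself flag: the intertwining identity (equivalently, the bi-orthogonality $\lf(P_nQ_m)=\Lam_n\delta_{n,m}$ of the recurrence-defined polynomials) is never verified, and that identity \emph{is} the conjecture --- everything preceding it in your write-up is bookkeeping that transforms the statement without reducing its difficulty. Your alternative ``double induction using \eqref{left_e} and \eqref{right_d}'' does not escape this either: moving a $\db$ past $Q_m(\eb)$ inside $\lf$ only collapses the off-diagonal terms if the coefficients $d_n^\natural$ and $bd\,q^{n-1}g_{n-1}$ emerge exactly from the elimination identities, and checking that is the same family of $q$-rational identities in $a,b,c,d,q$, now with the complication that \eqref{eq_cw_b} and \eqref{eq_cw_c} are coupled (the point the paper's concluding remarks emphasize as the essential jump from three to five parameters). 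Two smaller cautions: defining $\dmm_{n,n}=\det\bmm^{(n)}/\det\bmm^{(n-1)}$ presupposes the nonvanishing of these minors, which is not known independently of the conjecture --- you should instead take $\dmm_n=\prod_{i=0}^{n-1}g_i$ as in \eqref{eq_bmmdiage} and let the determinant formula \eqref{eq_bidet} be a corollary; and your appeal to ``uniqueness already noted'' must be to uniqueness of solutions of the recurrence system of \thmref{bimompascal}, not to uniqueness of the bi-orthogonal pair (\thmref{thm_othog}), since the latter is itself conditional on the conjecture and invoking it would be circular. As a roadmap your proposal is a useful sharpening of the paper's remark, but it does not constitute a proof, and the paper offers none to compare it against.
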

We make two remarks: first, one of the final results of this conjecture is a representation for $\db$ and $\eb$. Having obtained a candidate representation it is then straightforward to verify that is   a representation by substituting back into the defining algebra -- \defref{thm_cw}. This has been done and hence the representation verified. Assuming the logic of the calculation can be reversed, that would prove the conjecture. However, one of the primary aims of this paper is to \emph{derive} the representation and thus from this perspective is more satisfactory if the conjecture be proved directly.

Secondly, it is only necessary to conjecture $\lmm$ as, assuming \eqref{L} is valid, we can compute the corresponding recurrence relations for  the upper triangular matrix $\umm$ using a symmetry of $\bmm$. The bimoment matrix is invariant when taking the transpose and performing the substitutions $a \leftrightarrow b, c \leftrightarrow d $ or $a \leftrightarrow d, b \leftrightarrow c $. Under this action, the equations \eqref{erec} and \eqref{drec} swap as do the equations \eqref{lbound} and \eqref{ubound}. Thus the upper triangular matrix can be obtained from the lower triangular matrix by performing these substitutions.

\begin{corollary} \label{U}
\conj  The the upper triangular matrix elements $\umm_{i,j}$ of the $LDU$ decomposition of the bimoment matrix are given by the recurrence relation
\begin{align}
\umm_{i,j} &=
\begin{cases}
 \umm_{i-1,j-1} + e_i^{\natural} \umm_{i,j-1} -acq^ig_i \umm_{i+1,j-1}  \qquad &\text{for $i,j \geq 0$ and $i\leq j$\,,} \\
1 \qquad &\text{for $i=j=0$\,,} \\
0 \qquad &\text{otherwise\,.}
\end{cases}\label{eq_umat}
\end{align}
\end{corollary}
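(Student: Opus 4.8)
The plan is to derive the recurrence \eqref{eq_umat} for $\umm$ from the recurrence \eqref{L} for $\lmm$ by combining two ingredients: the uniqueness of the $LDU$ decomposition, and the transpose symmetry of $\bmm$ noted in the remark preceding the statement. Write $\sigma$ for the ring involution swapping $a\leftrightarrow b$ and $c\leftrightarrow d$ (the substitution $a\leftrightarrow d$, $b\leftrightarrow c$ would serve equally well), and $M^{T}$ for the transpose of a matrix $M$. The first step is to record precisely what the symmetry says. Set $\tilde{\bmm}:=\sigma(\bmm^{T})$, so $\tilde{\bmm}_{i,j}=\sigma(\bmm_{j,i})$. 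Starting from the fact that $\bmm$ satisfies \eqref{drec}, I would relabel $(i,j)\mapsto(j,i)$ and apply $\sigma$, using $\sigma(b+d)=a+c$ and $\sigma(bd)=ac$, to conclude that $\tilde{\bmm}$ satisfies \eqref{erec}; symmetrically \eqref{erec} for $\bmm$ yields \eqref{drec} for $\tilde{\bmm}$, and the boundary recurrences \eqref{lbound} and \eqref{ubound} are interchanged. Since $\tilde{\bmm}_{0,0}=\sigma(\bmm_{0,0})=1$ and the matrix elements are generated from this single initial value by the recurrences, $\tilde{\bmm}=\bmm$; that is, $\sigma(\bmm^{T})=\bmm$.

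Next I would invoke uniqueness. Under Conjecture~\ref{eq_conj} we have $\bmm=\lmm\dmm\umm$ with $\lmm$ unit lower-triangular (from \eqref{L} the diagonal entries telescope to $\lmm_{i,i}=1$), $\umm$ unit upper-triangular, and $\dmm$ diagonal; with all leading principal minors nonzero this factorisation is unique. Transposing gives $\bmm^{T}=\umm^{T}\dmm\,\lmm^{T}$, and applying $\sigma$ together with $\sigma(\bmm^{T})=\bmm$ yields
\[
\bmm=\sigma(\bmm^{T})=\sigma(\umm^{T})\,\sigma(\dmm)\,\sigma(\lmm^{T}).
\]
Because $\sigma$ alters only the parameter values and not the pattern of zero entries, $\sigma(\umm^{T})$ is unit lower-triangular, $\sigma(\lmm^{T})$ unit upper-triangular and $\sigma(\dmm)$ diagonal, so the right-hand side is again an $LDU$ factorisation of $\bmm$. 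Uniqueness forces $\umm=\sigma(\lmm^{T})$, i.e.\ $\umm_{i,j}=\sigma(\lmm_{j,i})$ (and, as a consistency check, $\lmm=\sigma(\umm^{T})$ and $\dmm=\sigma(\dmm)$).

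Finally I would push this identity through the recurrence \eqref{L}. Writing \eqref{L} at $(j,i)$ gives $\lmm_{j,i}=\lmm_{j-1,i-1}+d_i^{\natural}\lmm_{j-1,i}-bd\,q^{i}g_i\,\lmm_{j-1,i+1}$, valid for $i\le j$; applying $\sigma$ and using $\umm_{i,j}=\sigma(\lmm_{j,i})$ converts each of the three lagged entries into $\umm_{i-1,j-1}$, $\umm_{i,j-1}$ and $\umm_{i+1,j-1}$, while the condition $i\le j$ is exactly the range stated in \eqref{eq_umat}. What remains is to check that $\sigma$ carries the three coefficients correctly: that $\sigma(d_n^{\natural})=e_n^{\natural}$, that $g_i$ in \eqref{eq_gi} is $\sigma$-invariant (the factors $(1-abq^{i})$, $(1-cdq^{i})$ are fixed and $(1-bcq^{i})$, $(1-adq^{i})$ are merely interchanged, as is $abcd$), and that $\sigma(bd)=ac$. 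These substitutions turn the relabelled recurrence into $\umm_{i,j}=\umm_{i-1,j-1}+e_i^{\natural}\umm_{i,j-1}-ac\,q^{i}g_i\,\umm_{i+1,j-1}$, which is \eqref{eq_umat}; the base case $\umm_{0,0}=\sigma(\lmm_{0,0})=1$ and the vanishing outside $i\le j$ follow from the corresponding properties of $\lmm$.

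The genuinely delicate point is the coefficient identity $\sigma(d_n^{\natural})=e_n^{\natural}$: everything else is formal once uniqueness of the $LDU$ decomposition is granted. I expect the main obstacle to be the bookkeeping verification that the explicit bracketed polynomial defining $d_n^{\natural}$ maps to that of $e_n^{\natural}$ under the swap, since these expressions are long and one must confirm term-by-term (for example $bd(a+c)\leftrightarrow ac(b+d)$, $b+d\leftrightarrow a+c$, $ab^2cd^2(a+c)\leftrightarrow a^2bc^2d(b+d)$, with $abcd$ fixed) that $\sigma$ carries \eqref{drec} and its associated data onto \eqref{erec} rather than onto some third recurrence.
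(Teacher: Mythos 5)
Your proposal is correct and takes essentially the same approach as the paper: the paper has no formal proof environment for this corollary, but the remark immediately preceding it derives \eqref{eq_umat} from exactly this transpose-plus-substitution symmetry of $\bmm$ (swapping \eqref{erec} with \eqref{drec} and \eqref{lbound} with \eqref{ubound}). Your write-up simply makes explicit the details the paper leaves implicit — uniqueness of the $LDU$ factorisation, $\sigma(\bmm^{T})=\bmm$, the $\sigma$-invariance of $g_i$, and the verification $\sigma(d_n^{\natural})=e_n^{\natural}$ — all of which check out.
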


The diagonal matrix $\dmm$ of the $LDU$ decomposition can be calculated using the inverse of the $\lmm$ matrix. 

\begin{corollary}\label{tm_trirecrel}
\conj  Let $\lmm$ be the lower triangular matrix of the $LDU$ decomposition  of the bimoment matrix $\bmm$. Then the elements $\limm_{i,j}$ of the inverse of $\lmm$   satisfy  
\begin{align} \label{linvrec}
\limm_{i,j} = 
\begin{cases}
\limm_{i-1,j-1} - d_{i-1}^\natural \limm_{i-1,j} +bdq^{i-2}g_{i-2} \limm_{i-2,j} \qquad &\text{for $i,j \geq 0$ and $i\geq j$,} \\
1 \qquad &\text{for $i=j=0$,} \\
0 \qquad &\text{otherwise.}
\end{cases}
\end{align}	
Let $\umm$ be the upper triangular matrix of the $LDU$ decomposition  of the bimoment matrix $\bmm$. Then the elements $\uimm_{i,j}$ of the inverse of $\umm$  satisfy  
\begin{align} \label{uinvrec}
\uimm_{i,j} = 
\begin{cases}
\uimm_{i-1,j-1} - e_{j-1}^\natural \uimm_{i,j-1} +acq^{j-2}g_{j-2} \uimm_{i,j-2} \qquad &\text{for $i,j \geq 0$ and $i\leq j$,} \\
1 \qquad &\text{for $i=j=0$,} \\
0 \qquad &\text{otherwise.}
\end{cases}
\end{align}	
\end{corollary}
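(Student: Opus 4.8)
The plan is to read each of the two triangular recurrences as a single operator identity on infinite matrices and then simply invert that identity. Throughout, let $\Theta$ denote the super-diagonal shift with $\Theta_{i,i+1}=1$ and all other entries zero, so that $\Theta^{\top}$ is the sub-diagonal shift and $\Theta\,\Theta^{\top}=I$. The first observation is that $\lmm$ is unit lower-triangular: setting $i=j$ in \eqref{L} kills the two off-diagonal terms and gives $\lmm_{i,i}=\lmm_{i-1,i-1}=\cdots=1$, so $\lmm$ is invertible and $\limm$ is itself unit lower-triangular. Hence the corner data $\limm_{0,0}=1$ and $\limm_{i,j}=0$ for $i<j$ are automatic, and only the interior recurrence \eqref{linvrec} needs to be established.

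For the $\limm$ statement I would encode \eqref{L} by the tridiagonal \emph{production matrix} $\mathbf{P}$ with $\mathbf{P}_{k,k+1}=1$, $\mathbf{P}_{k,k}=d_k^\natural$ and $\mathbf{P}_{k,k-1}=-bd\,q^{k-1}g_{k-1}$. One checks entry-by-entry (using the zero convention for out-of-range indices) that \eqref{L} is exactly the statement
\begin{equation}
\Theta\,\lmm=\lmm\,\mathbf{P}\,.
\end{equation}
Multiplying on the left and right by $\limm$ turns this into $\limm\,\Theta=\mathbf{P}\,\limm$. In components $(\limm\Theta)_{i,j}=\limm_{i,j-1}$, while $(\mathbf{P}\limm)_{i,j}=\limm_{i+1,j}+d_i^\natural\limm_{i,j}-bd\,q^{i-1}g_{i-1}\limm_{i-1,j}$; solving for $\limm_{i+1,j}$ and relabelling $i+1\mapsto i$ yields precisely \eqref{linvrec}.

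The $\uimm$ statement is handled identically, now reading \eqref{eq_umat} column-wise. Its production matrix $\mathbf{Q}$ is tridiagonal with $\mathbf{Q}_{k,k-1}=1$, $\mathbf{Q}_{k,k}=e_k^\natural$ and $\mathbf{Q}_{k,k+1}=-ac\,q^{k}g_{k}$, and \eqref{eq_umat} becomes $\umm\,\Theta^{\top}=\mathbf{Q}\,\umm$; inverting gives $\Theta^{\top}\,\uimm=\uimm\,\mathbf{Q}$, which written out in components is exactly \eqref{uinvrec}. Alternatively, since transposition combined with $a\leftrightarrow b,\ c\leftrightarrow d$ interchanges $\lmm$ and $\umm$ (the symmetry already invoked to derive \eqref{eq_umat} from \eqref{L}, under which $d^\natural\leftrightarrow e^\natural$, $bd\leftrightarrow ac$, and $g_k$ is invariant), it also interchanges $\limm$ with $\uimm$, so \eqref{uinvrec} follows from \eqref{linvrec} by swapping $i\leftrightarrow j$ and applying the substitution, without repeating the computation.

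The only genuinely delicate point is the claim that $\Theta\,\lmm=\lmm\,\mathbf{P}$ (and its $\umm$ analogue) holds as an honest identity of \emph{infinite} matrices, that is, at every entry including the boundary rows and columns where the ``$=0$'' and ``$=1$'' clauses of \eqref{L} override the three-term rule. I would settle this by checking the identity separately in the three regions $i<j$, $i=j$ and $i>j$, using that every matrix involved is banded or triangular so that each matrix product is a finite sum and no convergence question arises. Once that bookkeeping is in place the inversion is purely formal, so this boundary verification — rather than any substantial algebra — is where I expect the real work to lie.
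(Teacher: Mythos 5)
Your proof is correct, and it reaches \eqref{linvrec} and \eqref{uinvrec} by a genuinely different route from the paper's. The paper argues in the opposite direction: it takes \eqref{linvrec} as the \emph{definition} of a candidate matrix and verifies $\limm\lmm=\mathbf{1}$ entrywise, substituting both three-term rules to obtain the four-term recurrence \eqref{llinvrec} for the product $\llmm$, and then propagating values diagonal by diagonal (the main diagonal carries the value $1$ back to $\llmm_{0,0}=1$, while each lower diagonal carries $0$ back to an out-of-range entry). You instead observe that \eqref{L} is the single intertwining identity $\Theta\,\lmm=\lmm\,\mathbf{P}$ with $\mathbf{P}$ the tridiagonal production matrix, note that $\lmm$ is unit lower triangular and hence invertible with unit lower triangular inverse, and conjugate to get $\limm\,\Theta=\mathbf{P}\,\limm$, whose $(i,j)$ entry, after solving for $\limm_{i+1,j}$ and relabelling, is exactly \eqref{linvrec}; the $\uimm$ half (or your shortcut via the transpose-and-substitution symmetry, which is precisely how the paper itself obtains Corollary \ref{U} from \eqref{L}) is the same. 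I checked your entrywise claims: $(\lmm\mathbf{P})_{i,j}=\lmm_{i,j-1}+d_j^\natural\lmm_{i,j}-bdq^jg_j\lmm_{i,j+1}$ does equal $(\Theta\lmm)_{i,j}=\lmm_{i+1,j}$ by \eqref{L}; the $j=0$ column is handled automatically because $\mathbf{P}$ has no column $-1$, which implements the convention $\lmm_{i,-1}=0$; the region $i<j$ holds trivially; and associativity is unproblematic since every matrix involved is row-finite, so each entry of each product is a finite sum. Thus the boundary bookkeeping you flag as the real work is genuine but routine, and your derived identity reproduces the corollary's boundary clauses as well (e.g.\ $(\limm\Theta)_{i,0}=0$ encodes $\limm_{i-1,-1}=0$, and triangularity of $\limm$ gives the ``otherwise'' cases). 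Comparing the two: your route \emph{derives} the inverse recurrence rather than verifying a conjectured one, which is closer in spirit to the paper's stated preference for derivation over verification, and it replaces the diagonal induction by a one-line algebraic inversion; the paper's route, on the other hand, produces an entrywise computation and the auxiliary recurrence \eqref{llinvrec} that are reused almost verbatim in the next theorem to extract the diagonal matrix $\dmm$ from $\lbmm$, a payoff your operator identity does not supply for free. One last remark: both proofs, yours included, are in fact unconditional statements about the matrices defined by \eqref{L} and \eqref{eq_umat}; Conjecture \ref{eq_conj} enters only in identifying these matrices with the $L$ and $U$ factors of $\bmm$.
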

\begin{proof}
We will show that $\limm \lmm = \mathbf{1}$.  
Thus, substituting \eqref{linvrec} and then using \eqref{L} gives,
\begin{align}
\llmm_{i,j} & = \sum_k \lmm_{i,k}^{-1} \lmm_{k,j}\notag\\
&= \llmm_{i-1,j-1} + (d^{\natural}_j- d^{\natural}_{i-1}) \llmm_{i-1,j}\notag\\ 
&\qquad - bdq^jg_j \llmm_{i-1,j+1} +bdq^{i-2}g_{i-2} \llmm_{i-2,j} \label{llinvrec} \, . 
\end{align}
All entries above the main diagonal are zero since the matrix is lower triangular. On the diagonal \eqref{llinvrec} gives,
\begin{equation}
\llmm_{i,i}  = \llmm_{i-1,i-1} = \llmm_{0,0} = 1\,.
\end{equation}
 All that needs to be shown is that all the other diagonals contain only zeros. The recurrence \eqref{llinvrec} gives a matrix element in terms of elements from its own diagonal and the two diagonals above it. The only non-zero elements above the first off-diagonal are in the main diagonal. Therefore,
\begin{equation} 
\llmm_{i,i-1} = \llmm_{i-1,i-2} = \llmm_{0,-1} = 0\,.
\end{equation}
 Similarly for the second off-diagonal,
 \begin{equation}
\llmm_{i,i-2}  = \llmm_{i-1,i-3} = \llmm_{1,-1} = 0\,.
\end{equation}
Since the first two off-diagonals are zero, we get for $c>0$
\begin{equation}
\llmm_{i,i-c}  = \llmm_{i-1,i-1-c} = \llmm_{c-1,-1}=0\,.
\end{equation}
 The proof for $\uimm$ follows similarly.
\end{proof}
We can now compute the elements of the diagonal matrix $\dmm$.
\begin{theorem} \conj\
The diagonal matrix elements $\dmm_{n}$ of the  matrix $\dmm$ of \eqref{eq_lduf} satisfy a first order recurrence relation  giving
\begin{equation}\label{eq_bmmdiage}
\dmm_{n} =\prod_{i=0}^{n-1} g_{i} 
\end{equation}
for $n \geq 1$, where $g_i$ is given by \eqref{eq_gi} and   $\dmm_0 = 1$.
\end{theorem}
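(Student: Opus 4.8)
The plan is to establish the equivalent first-order recurrence $\dmm_n = g_{n-1}\dmm_{n-1}$ for $n\ge 1$, from which the product formula \eqref{eq_bmmdiage} follows immediately by induction on $n$ using $\dmm_0=1$. To extract $\dmm_n$ as a single matrix entry, I would use the decomposition \eqref{eq_lduf} in the form $\bmm\uimm = \lmm\dmm$. Since $\lmm$ is unit lower triangular, the mixed array $h(n,m):=(\bmm\uimm)_{n,m}$ satisfies $h(n,m)=\lmm_{n,m}\dmm_m$ for $n\ge m$ and $h(n,m)=0$ for $n<m$; in particular $\dmm_n=h(n,n)$, while the off-diagonal values $h(n,n-1)=\lmm_{n,n-1}\dmm_{n-1}$, $h(n+1,n-1)=\lmm_{n+1,n-1}\dmm_{n-1}$ and $h(n,n-2)=\lmm_{n,n-2}\dmm_{n-2}$ are controlled by the low subdiagonals of $\lmm$.

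Next I would derive a closed recurrence for $h$. Substituting the inverse recurrence \eqref{uinvrec} for $\uimm_{k,m}$ into $h(n,m)=\sum_k\bmm_{n,k}\uimm_{k,m}$ produces a term $\sum_k \bmm_{n,k+1}\uimm_{k,m-1}$; re-indexing and using the bimoment recurrence \eqref{erec} to replace $\bmm_{n,k+1}$ by $(1-q^n)\bmm_{n-1,k}+(a+c)q^n\bmm_{n,k}-acq^n\bmm_{n+1,k}$ collapses every sum back into values of $h$, giving
\begin{multline*}
h(n,m) = (1-q^n)h(n-1,m-1) + \big[(a+c)q^n - e_{m-1}^\natural\big]h(n,m-1) \\
 - acq^n\, h(n+1,m-1) + acq^{m-2}g_{m-2}\,h(n,m-2).
\end{multline*}
Evaluating this on the diagonal $m=n$ and inserting the entry identifications above turns it into $\dmm_n=(1-q^n)\dmm_{n-1}+\big[(a+c)q^n-e_{n-1}^\natural\big]\lmm_{n,n-1}\dmm_{n-1}-acq^n\lmm_{n+1,n-1}\dmm_{n-1}+acq^{n-2}g_{n-2}\lmm_{n,n-2}\dmm_{n-2}$. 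The only obstruction to a genuine first-order relation is the trailing $\dmm_{n-2}$ term; invoking the induction hypothesis $\dmm_{n-1}=g_{n-2}\dmm_{n-2}$ rewrites $g_{n-2}\dmm_{n-2}$ as $\dmm_{n-1}$, so that the whole right-hand side factors as a bracket times $\dmm_{n-1}$.

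It then remains to verify the scalar identity
\[
g_{n-1} = (1-q^n) + \big[(a+c)q^n - e_{n-1}^\natural\big]\lmm_{n,n-1} - acq^n\lmm_{n+1,n-1} + acq^{n-2}\lmm_{n,n-2}.
\]
The required entries of $\lmm$ are obtained by specialising the defining recurrence \eqref{L}: on the first subdiagonal it telescopes to $\lmm_{n,n-1}=\sum_{k=0}^{n-1}d_k^\natural$, and the two second-subdiagonal entries $\lmm_{n+1,n-1}$ and $\lmm_{n,n-2}$ follow from one further application of \eqref{L} together with these first-subdiagonal values. Proving the displayed identity is the main obstacle: it is a purely algebraic simplification of rational functions in $q,a,b,c,d$, and I expect it to be cleanest after noting that $g_n=\mathcal A_n^2$ (compare \eqref{eq_gi} with the formula for $\mathcal A_n$) and substituting the explicit expressions for $d_n^\natural$, $e_n^\natural$ and $g_n$. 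By the $a\leftrightarrow b,\ c\leftrightarrow d$ transpose symmetry of $\bmm$ noted before Corollary \ref{U}, one could equally run the whole argument with $h(n,m)=(\limm\bmm)_{n,m}=\dmm_n\umm_{n,m}$ and the superdiagonals of $\umm$ via \eqref{eq_umat}, whichever makes the final rational identity more transparent.
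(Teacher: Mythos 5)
Your derivation of the mixed recurrence for $h(n,m)=(\bmm\uimm)_{n,m}$ is correct, and it is the mirror image (via the transpose symmetry noted before Corollary~\ref{U}) of what the paper itself does with $\lbmm=\dmm\umm$. The gap is in the endgame. By evaluating the recurrence \emph{on the diagonal} $m=n$ you drag in the unknown entries $\lmm_{n,n-1}$, $\lmm_{n+1,n-1}$, $\lmm_{n,n-2}$, and you reduce the theorem to the scalar identity
\[
g_{n-1} = (1-q^n) + \bigl[(a+c)q^n - e_{n-1}^\natural\bigr]\lmm_{n,n-1} - acq^n\,\lmm_{n+1,n-1} + acq^{n-2}\,\lmm_{n,n-2}\,,
\]
which you never prove --- you only say you ``expect'' it to simplify. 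Note that this is not a finite symbolic check: $\lmm_{n,n-1}=\sum_{k=0}^{n-1}d_k^\natural$ is an $n$-term sum, so the identity requires a closed form for that sum (presumably by telescoping the partial-fraction structure of $d_k^\natural$, whose denominator $(1-q^{2k-2}abcd)(1-q^{2k}abcd)$ suggests this is possible but is real work), and the second-subdiagonal entries $\lmm_{n+1,n-1}$, $\lmm_{n,n-2}$ are messier still. As written, the proof is therefore incomplete at exactly its decisive step, and the auxiliary induction $\dmm_{n-1}=g_{n-2}\dmm_{n-2}$ you invoke is a symptom of having picked up terms you did not need.

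The missing observation --- and it is the whole point of the paper's proof --- is to evaluate the recurrence \emph{off} the triangular band rather than on the diagonal. Since $h=\lmm\dmm$ vanishes for $n<m$, setting $m=n+2$ in your own displayed recurrence kills $h(n,n+2)$, $h(n-1,n+1)$ and $h(n,n+1)$, leaving
\[
0 = -acq^n\,h(n+1,n+1) + acq^n g_n\,h(n,n) = -acq^n\,\dmm_{n+1} + acq^n g_n\,\dmm_n\,,
\]
i.e.\ $\dmm_{n+1}=g_n\dmm_n$ directly (dividing by $acq^n$, legitimate for generic parameters since $ac=-\gamma/\alpha\ne 0$), whence the product formula follows with $\dmm_0=1$. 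No subdiagonal entry of $\lmm$, no induction hypothesis, and no unproven rational identity is needed. This is precisely the paper's argument in transposed form: it works with $\limm\bmm=\dmm\umm$, uses \eqref{linvrec} and \eqref{drec} to get the analogous recurrence, and then sets $n=m+2$, using upper-triangularity of $\dmm\umm$ to collapse everything to the first-order relation. So: right machinery, wrong evaluation point --- the step you flag as ``the main obstacle'' is an obstacle only because of that choice, and your proof is not complete until either the identity above is established or the evaluation point is moved.
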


\begin{proof}
This proof follows similarly to the proof of $\limm$. Assuming the conjecture is true, $\lmm^{-1} \bmm$ is an upper triangular matrix.
Using \eqref{linvrec}, the recurrence for $\limm$, gives, 
\begin{multline}
\lbmm_{n,m}  = \sum_{i=0}^n \limm_{n,i}\bmm_{i,m}
 =\sum_{i=0}^n \limm_{n-1,i-1}\bmm_{i,m} - d_{n-1}^\natural \lbmm_{n-1,m}\\
 +bdq^{n-2}g_{n-2} \lbmm_{n-2,m}\,.
\end{multline}
Using \eqref{drec}, the recurrence for the bimoment matrix, gives   
 \begin{multline}
\lbmm_{n,m}  =(1-q^{m}) \lbmm_{n-1,m-1} + ((b+d)q^m- d_{n-1}^\natural) \lbmm_{n-1,m}\\
- bdq^m g_m \lbmm_{n-1,m+1}  +bdq^{n-2}g_{n-2} \lbmm_{n-2,m} \,.
\end{multline} 
with $n=m+2$ and the fact that the matrix product is upper triangular gives the stated result.
\end{proof}

The value of the determinant, $\det\,\bmm^{(n)}$, is simple to calculate from the   $LDU$-decomposition of the bimoment matrix it being the product of the elements of the diagonal matrix.

\begin{theorem}\label{detthm} \conj\ 
Let $\bmm^{(n)}=(\bmm_{i,j} )_{0\le  i,j\le n}$ be the  truncated $(n+1)\times(n+1)$ bimoment matrix whose elements are defined by \thmref{bimompascal}.  Then 
\begin{equation}\label{eq_bidet}
	\det\,\bmm^{(n)}=\\
	\prod_{i=1}^n \frac{\qpoch{abcd/q,q,ab,bc,ad,cd}{i}}{\qqpoch{abcd/q,abcd,abcd,abcdq}{i}}\,.
\end{equation}
\end{theorem}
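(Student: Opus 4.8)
The plan is to read the determinant straight off the $LDU$ factorization provided by Conjecture \ref{eq_conj}, so that the whole statement collapses onto the evaluation of the diagonal factor already carried out in \eqref{eq_bmmdiage}. First I would record the one structural point that makes the truncation behave: because $\lmm$ is lower triangular and $\umm$ is upper triangular, the factorization $\bmm=\lmm\dmm\umm$ of \eqref{eq_lduf} descends to every leading principal block, i.e.\ $\bmm^{(n)}=\lmm^{(n)}\dmm^{(n)}\umm^{(n)}$. Indeed, in $(\lmm\dmm\umm)_{i,j}=\sum_k\lmm_{i,k}\dmm_k\umm_{k,j}$ the factor $\lmm_{i,k}$ vanishes unless $k\le i$ and $\umm_{k,j}$ vanishes unless $k\le j$, so for $i,j\le n$ the summation is confined to $k\le\min(i,j)\le n$. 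Moreover both $\lmm$ and $\umm$ are unitriangular: setting $i=j$ in \eqref{L} gives $\lmm_{i,i}=\lmm_{i-1,i-1}+d_i^\natural\lmm_{i-1,i}-bdq^ig_i\lmm_{i-1,i+1}=\lmm_{i-1,i-1}$ since the last two terms sit strictly above the diagonal and are zero, whence $\lmm_{i,i}=\cdots=\lmm_{0,0}=1$, and identically for $\umm$ via \eqref{eq_umat}. Therefore $\det\lmm^{(n)}=\det\umm^{(n)}=1$ and $\det\bmm^{(n)}=\prod_{k=0}^{n}\dmm_k$.

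It then remains to put $\prod_{k=0}^{n}\dmm_k$ into the closed form \eqref{eq_bidet}. By \eqref{eq_bmmdiage} each pivot is $\dmm_k=\prod_{j=0}^{k-1}g_j$ with $\dmm_0=1$, so I would simplify this product of the $g_j$ of \eqref{eq_gi} factor-by-factor. The six numerator factors telescope into ordinary $q$-shifted factorials: $\prod_{j=0}^{k-1}(1-abcd\,q^{j-1})=\qpoch{abcd/q}{k}$ (writing $abcd\,q^{j-1}=(abcd/q)q^{j}$), $\prod_{j=0}^{k-1}(1-q^{j+1})=\qpoch{q}{k}$, and likewise $\qpoch{ab}{k},\qpoch{bc}{k},\qpoch{ad}{k},\qpoch{cd}{k}$, reproducing the numerator of \eqref{eq_bidet}. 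The four denominator factors carry the arguments $abcd\,q^{2j-1}$, $abcd\,q^{2j}$ (twice) and $abcd\,q^{2j+1}$, and collect instead into base-$q^2$ factorials $\qqpoch{abcd/q}{k}$, $\qqpoch{abcd}{k}^2$ and $\qqpoch{abcdq}{k}$, matching the denominator of \eqref{eq_bidet}. Hence $\dmm_k$ is exactly the $k$-th factor of the asserted product.

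Assembling the two steps, $\det\bmm^{(n)}=\prod_{k=0}^{n}\dmm_k=\prod_{k=1}^{n}\dmm_k$ after discarding $\dmm_0=1$, which is precisely \eqref{eq_bidet}. I expect no genuine obstacle, since all the analytic content is inherited from the earlier results and what remains is the elementary identity ``the determinant of a unitriangular $LDU$ is the product of its pivots.'' The only point demanding care is the base-$q^2$ bookkeeping in the denominator: the shifts $q^{2j-1}$, $q^{2j}$ and $q^{2j+1}$ must be matched against the bases $abcd/q$, $abcd$ and $abcdq$ respectively, so that the three distinct $q^2$-factorials emerge with the correct multiplicities rather than recombining into a single base-$q$ product.
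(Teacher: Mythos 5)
Your proof is correct and takes essentially the same route as the paper, whose entire justification is the remark preceding the theorem that $\det\bmm^{(n)}$ is read off the $LDU$ decomposition \eqref{eq_lduf} as the product of the diagonal pivots $\dmm_k=\prod_{j=0}^{k-1}g_j$ from \eqref{eq_bmmdiage}, which telescopes into \eqref{eq_bidet} exactly as you compute. The extra details you supply --- the descent of the factorization to leading principal blocks, the unitriangularity of $\lmm$ and $\umm$ forced by the recurrences \eqref{L} and \eqref{eq_umat}, and the base-$q^2$ bookkeeping for the denominator factors --- merely make explicit what the paper leaves implicit.
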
 

We now use the bimoment matrix to show the existence and uniqueness of the  polynomials sequences \eqref{eq_polypair}.  For $n,m\ge 0$  we require the bi-orthogonality condition
\begin{equation}\label{bi-orthog}
\lf(P_n(\db)\,  Q_m(\eb))=\Lam_n\delta_{n,m}
\end{equation}
where $\Lam_n$ is  a sequence of non-zero normalisation factors  determined by $\lf$ and the monic constraint. 

If this bi-orthogonality is translated into  the form of the original matrix product Ansatz, then the equation  is asking the question: Does there exist polynomials  $P_n(\db)$ and $Q_m(\eb)$  in the matrices $\db$ and $\eb$ such that
\begin{equation}\label{eq_maxbiorth}
	\bra{W}P_n(\db)\,Q_m(\eb)\ket{V}=\Lam_n\delta_{n,m}\, 
\end{equation}
for the boundary vectors  $\ket{V}$ and  $\bra{W}$ satisfying  \eqref{eq_bdry}.
If the sequences exist we get a new pair of basis vectors   $ \ket{\hat{V}_n}_{n\ge0}$  and  their orthonormal  (with respect to $\lf$) duals  $ \bra{\hat{W}_n}_{n\ge0}$,  given by 
\begin{equation}\label{eq_orthobasis}
	\bra{\hat{W}_n}=\bra{W}P_n(\db)\frac{1}{\sqrt{\Lam_n}}\qquad\text{and}\qquad\ket{\hat{V}_n}=\frac{1}{\sqrt{\Lam_n}}Q_n(\eb)\ket{V}\,,
\end{equation}
where $\ket{\hat{V}_0}=\ket{ V }$ and $\bra{\hat{W}_0}=\bra{W }$. We normalise so that $ \bran{W }\ket{ V }=1$. 
From these basis vectors,
we get matrix representations for $\db$ and $\eb$ by computing 
\begin{equation}\label{eq_bdrybasis}
	\db_{n,m}=\bra{\hat{W}_n} \db \ket{\hat{V}_m}\qquad \text{and}\qquad \eb_{n,m}=\bra{\hat{W}_n} \eb\ket{\hat{V}_m}\,.
\end{equation}

Returning to the question of the existence of bi-orthogonal polynomials  we  have the following theorem   stating a unique   pair of sequences exists.
\begin{theorem} \label{thm_othog}\conj\ 
 Let   $\set{P_n(\db)}_{n\ge0}$ and $\set{Q_n(\eb)}_{n\ge0}$ be a pair of sequences of monic polynomials   satisfying 
\begin{equation}
    \lf(P_n  Q_m)=\Lam_n\delta_{n,m}
\end{equation}
where  the linear functional $\lf$ is defined by  equations \eqref{eq_cw}. 
Then $\set{P_n}_{n\ge0}$ and $\set{Q_n}_{n\ge0}$ exist and are unique with
\begin{equation}\label{eq_lamn}
	\Lam_n= \dmm_n
\end{equation}
for $n\geq0$ 
\end{theorem}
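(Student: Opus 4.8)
The plan is to read the entire statement off the $LDU$ factorisation $\bmm=\lmm\dmm\umm$ supplied by Conjecture~\ref{eq_conj} and Corollaries~\ref{U}, \ref{tm_trirecrel}, by identifying the coefficient matrices of the two polynomial sequences with the triangular factors. Write a generic monic pair as $P_n(\db)=\sum_{k=0}^n \mathbf{P}_{n,k}\db^{k}$ and $Q_m(\eb)=\sum_{l=0}^m \mathbf{Q}_{m,l}\eb^{l}$, where $\mathbf{P},\mathbf{Q}$ are lower-triangular coefficient matrices with $\mathbf{P}_{n,n}=\mathbf{Q}_{m,m}=1$. The key observation is that because $P_n$ is a polynomial in $\db$ alone and $Q_m$ in $\eb$ alone, the product $P_n(\db)Q_m(\eb)$ is already a sum of normal-ordered monomials $\db^{k}\eb^{l}$; no application of \eqref{eq_cw_a} is needed, and linearity of $\lf$ together with \eqref{binom} gives
\[
\lf\bigl(P_n(\db)Q_m(\eb)\bigr)=\sum_{k=0}^n\sum_{l=0}^m \mathbf{P}_{n,k}\,\bmm_{k,l}\,\mathbf{Q}_{m,l}=\bigl(\mathbf{P}\,\bmm\,\mathbf{Q}^{\mathsf T}\bigr)_{n,m}.
\]
Hence the bi-orthogonality requirement \eqref{bi-orthog} is \emph{equivalent} to the matrix identity $\mathbf{P}\,\bmm\,\mathbf{Q}^{\mathsf T}=\mathrm{diag}(\Lam_0,\Lam_1,\dots)$.

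For existence, rearrange the factorisation as $\limm\,\bmm\,\uimm=\dmm$. Taking $\mathbf{P}=\limm$ (so the coefficients of $P_n$ are the $n$-th row of $\limm$) and $\mathbf{Q}^{\mathsf T}=\uimm$ (so the coefficients of $Q_m$ are the $m$-th column of $\uimm$) satisfies the matrix identity with $\Lam_n=\dmm_n$. Since $\lmm$ is unit lower-triangular, so is $\limm$, giving $\mathbf{P}_{n,n}=1$; likewise $\uimm$ is unit upper-triangular, giving $\mathbf{Q}_{m,m}=\uimm_{m,m}=1$. Thus both sequences are monic of the correct degree, and triangularity of $\limm,\uimm$ guarantees $\deg P_n=n$, $\deg Q_m=m$. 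Finally $\Lam_n=\dmm_n=\prod_{i=0}^{n-1}g_i$ by \eqref{eq_bmmdiage}, which is nonzero precisely when no factor $g_i$ vanishes; under this genericity assumption on the parameters the $\Lam_n$ are nonzero, as required.

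For uniqueness I would decouple the two sequences. Suppose any monic pair satisfies \eqref{bi-orthog}. Because $Q_0,\dots,Q_{n-1}$ are monic of degrees $0,\dots,n-1$, they span the polynomials in $\eb$ of degree $<n$, so each power $\eb^{l}$ with $l<n$ is a linear combination of them; consequently $\lf(P_nQ_m)=0$ for all $m<n$ is equivalent to $\lf(P_n\eb^{l})=0$ for $l=0,\dots,n-1$. Writing this out, the unknown coefficients $\mathbf{P}_{n,0},\dots,\mathbf{P}_{n,n-1}$ solve the linear system $\sum_{k=0}^{n-1}\mathbf{P}_{n,k}\bmm_{k,l}=-\bmm_{n,l}$, $l=0,\dots,n-1$, whose coefficient matrix is the truncated bimoment matrix $\bmm^{(n-1)}$. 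By \thmref{detthm} this determinant is a nonzero product, so the system has a unique solution and $P_n$ is determined uniquely; the symmetric argument (using that $P_0,\dots,P_{m-1}$ span the degree-$<m$ polynomials in $\db$, together with invertibility of $\bmm^{(m-1)}$) determines $Q_m$ uniquely. Once $P_n$ and $Q_n$ are fixed, $\Lam_n=\lf(P_nQ_n)$ is determined and, by the existence construction, equals $\dmm_n$, establishing \eqref{eq_lamn}.

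The main obstacle is not the linear algebra above, which is routine once the correspondence with the triangular factors is in place, but rather its two hidden hypotheses. First, the whole argument is conditional on Conjecture~\ref{eq_conj}: without the explicit $LDU$ decomposition one must instead prove $\det\bmm^{(n)}\neq0$ for all $n$ from scratch, and it was exactly the absence of a Krattenthaler-type determinant evaluation in the five-parameter case that forced the conjectural route. Second, the non-vanishing of $\dmm_n$ (equivalently of every $g_i$) is essential both for $\Lam_n\neq0$ and for the invertibility of $\bmm^{(n-1)}$ underlying uniqueness; this should be recorded as a genericity condition on $a,b,c,d,q$, since at parameter values where some $g_i=0$ the bi-orthogonal pair degenerates or fails to exist.
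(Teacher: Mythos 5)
Your proof is correct and follows essentially the same route as the paper: the paper omits the proof (deferring to the three-parameter paper \cite{brakMore2015}, whose argument is exactly your moment-expansion/linear-system one), and your identifications $\mathbf{P}=\limm$, $\mathbf{Q}^{\mathsf T}=\uimm$ with $\Lam_n=\dmm_n$ are precisely what the paper records in Corollary \ref{cor_pold} and equation \eqref{eq_lamn}, read off the same conjectured $LDU$ factorisation. Your explicit genericity caveat (all $g_i\neq 0$, hence $\det\bmm^{(n)}\neq 0$ and $\Lam_n\neq 0$) simply makes precise a hypothesis the paper leaves implicit in demanding non-zero normalisation factors.
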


The proof is exactly the same as that which appears in \cite{brakMore2015} (except for the different value of $\Lam_n$) and thus we omit it.

To find the explicit form of the polynomials we need to evaluate  two determinants (see \cite{brakMore2015} for their derivation),
 \begin{equation}\label{expolyp}
P_n(\db)=\frac{1}{\det\,\bmm^{(n-1)}}\det\left(\begin{matrix}
\bmm_{0,0} & \bmm_{0,1} & \hdots  &\bmm_{0,n-1}  &1\\
\bmm_{1,0} & \bmm_{1,1} & \hdots  & \bmm_{1,n-1}& \db\\
\vdots & \vdots &\ddots & \vdots & \vdots\\
\bmm_{n-1,0} & \bmm_{n-1,1} &  \hdots  &\bmm_{n-1,n-1}& \db^{n-1}\\ 
\bmm_{n,0} & \bmm_{n,1} &  \hdots  &\bmm_{n,n-1}& \db^n 
\end{matrix}\right) \,,
\end{equation}
and
\begin{equation}\label{expolyq} 
Q_n(\eb)=\frac{1}{\det\,\bmm^{(n-1)}}\det\left(\begin{matrix}
\bmm_{0,0} & \bmm_{0,1} & \hdots & \bmm_{0,n-1} &\bmm_{0,n}\\
\bmm_{1,0} & \bmm_{1,1} & \hdots & \bmm_{1,n-1} & \bmm_{1,n}\\
\vdots & \vdots &\ddots & \vdots & \vdots\\
\bmm_{n-1,0} & \bmm_{n-1,1} &  \hdots &\bmm_{n-1,n-1}  &\bmm_{n-1,n}\\
1 & \eb &\hdots & \eb^{n-1} & \eb^n 
\end{matrix}\right)\,.
\end{equation}	
The two determinants can be evaluated  by  $LDU$ decomposition  of the two matrices leading to the following result.
\begin{theorem}  \label{lem_polrec} \conj\ 
The  pair of sequences of monic polynomials   $\set{P_n(\db)}_{n\ge0}$ and $\set{Q_n(\eb)}_{n\ge0}$   satisfy  
\begin{subequations}\label{eq_polexpn}
\begin{align}
	\db^n &= \sum_{k=0}^{n} \lmm_{n,k} P_k(\db)\label{eq_polexpnp}\,,\\
	\eb^n &= \sum_{k=0}^{n} Q_k(\eb)\umm_{k,n} \label{eq_polexpnq}\,,
\end{align}
\end{subequations}
where $\lmm_{n,k}$ and $\umm_{k,n}$ are the matrix elements of the lower triangular matrix $\lmm$ and  upper triangular matrix $\umm$ given by \eqref{L} and \eqref{eq_umat} respectively.
\end{theorem}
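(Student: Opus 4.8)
The plan is to recognise that the two expansions in \eqref{eq_polexpn} are nothing but the statement that the unit-triangular factors $\lmm$ and $\umm$ of the $LDU$ decomposition are the inverse change-of-basis matrices between the monomial bases $\set{\db^n}$, $\set{\eb^n}$ and the bi-orthogonal bases $\set{P_n}$, $\set{Q_n}$. First I would encode the monic polynomials by their coefficient matrices: write $P_n(\db)=\sum_{k=0}^n \mathbf{P}_{n,k}\,\db^k$ and $Q_m(\eb)=\sum_{\ell=0}^m \mathbf{Q}_{m,\ell}\,\eb^\ell$, where the semi-infinite matrices $\mathbf{P}=(\mathbf{P}_{n,k})$ and $\mathbf{Q}=(\mathbf{Q}_{m,\ell})$ are unit lower triangular because each polynomial is monic of degree equal to its index.

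The first key step is to turn the bi-orthogonality relation \eqref{bi-orthog} into a single matrix identity. Using linearity of $\lf$ and the definition $\bmm_{k,\ell}=\lf(\db^k\eb^\ell)$,
\[
\lf\!\left(P_n\,Q_m\right)=\sum_{k,\ell}\mathbf{P}_{n,k}\,\bmm_{k,\ell}\,\mathbf{Q}_{m,\ell}=\bigl(\mathbf{P}\,\bmm\,\mathbf{Q}^{T}\bigr)_{n,m}\,,
\]
so bi-orthogonality is exactly $\mathbf{P}\,\bmm\,\mathbf{Q}^{T}=\dmm$, where $\dmm=\mathrm{diag}(\Lam_0,\Lam_1,\dots)$ and, by \thmref{thm_othog}, $\Lam_n=\dmm_n$. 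Rearranging gives
\[
\bmm=\mathbf{P}^{-1}\,\dmm\,(\mathbf{Q}^{-1})^{T}\,.
\]
Here $\mathbf{P}^{-1}$ is unit lower triangular, $\dmm$ is diagonal, and $(\mathbf{Q}^{-1})^{T}$ is unit upper triangular, so the right-hand side is an $LDU$ decomposition of $\bmm$.

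The second key step is uniqueness. Assuming Conjecture 1, the decomposition $\bmm=\lmm\dmm\umm$ with $\lmm$ given by \eqref{L} and $\umm$ by \eqref{eq_umat} exists; since every truncation $\bmm^{(n)}$ is invertible (its determinant is nonzero by \thmref{detthm}), the unit-triangular factors are uniquely determined. Comparing the two factorisations forces $\lmm=\mathbf{P}^{-1}$ and $\umm=(\mathbf{Q}^{-1})^{T}$, equivalently $\mathbf{Q}^{-1}=\umm^{T}$. Finally I would translate these identities back into expansions: reading $P_n=\sum_k \mathbf{P}_{n,k}\db^k$ as $\vec P=\mathbf{P}\,\vec{\db}$ with $\vec{\db}=(\db^0,\db^1,\dots)^{T}$, inversion yields $\vec{\db}=\mathbf{P}^{-1}\vec P=\lmm\,\vec P$, which is \eqref{eq_polexpnp}; likewise $\vec{\eb}=\mathbf{Q}^{-1}\vec Q=\umm^{T}\vec Q$ gives $\eb^n=\sum_k\umm_{k,n}Q_k(\eb)$, which is \eqref{eq_polexpnq} after commuting the scalar $\umm_{k,n}$ past $Q_k$.

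The only genuinely delicate point is the uniqueness argument in the semi-infinite setting: one must verify that the entries of the triangular factors are pinned down row-by-row (equivalently, on each finite truncation $\bmm^{(n)}$) so that the two $LDU$ factorisations really do coincide, and this is exactly where the nonvanishing of $\det\bmm^{(n)}$ from \thmref{detthm} is essential. Everything else is linear bookkeeping; in particular no properties of the explicit recurrences \eqref{L}, \eqref{eq_umat} are needed beyond the fact that they define the $LDU$ factors. The alternative route of expanding the determinant formulas \eqref{expolyp} and \eqref{expolyq} directly via the $LDU$ decomposition reaches the same conclusion but with considerably more computation, so I would present the matrix–uniqueness argument as the main line.
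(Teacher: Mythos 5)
Your proposal is correct, but it takes a genuinely different route from the paper's own proof. The paper works through the determinant representations \eqref{expolyp} and \eqref{expolyq}: assuming Conjecture 1, the bimoment entries in those determinants are replaced by the entries of $\lmm$ (resp.\ $\umm$), giving \eqref{lduexpolyp} and \eqref{lduexpolyq}, and Laplace expansion along the bottom row then exhibits each surviving sub-determinant as a determinant of the same shape of smaller size, i.e.\ as $P_k(\db)$ (resp.\ $Q_k(\eb)$), which is precisely \eqref{eq_polexpnp} and \eqref{eq_polexpnq}. You instead encode $P_n$ and $Q_m$ by unit-lower-triangular coefficient matrices $\mathbf{P}$, $\mathbf{Q}$, observe that bi-orthogonality is exactly the matrix identity $\mathbf{P}\,\bmm\,\mathbf{Q}^{T}=\dmm$, rewrite this as the $LDU$ factorisation $\bmm=\mathbf{P}^{-1}\dmm(\mathbf{Q}^{-1})^{T}$, and then invoke uniqueness of the factorisation --- valid because the leading principal minors of $\bmm$ are nonzero by Theorem \ref{detthm}, equivalently $\dmm$ is invertible --- to force $\lmm=\mathbf{P}^{-1}$ and $\umm=(\mathbf{Q}^{-1})^{T}$, from which both expansions follow by inverting the change of basis. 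Your uniqueness step is sound: you correctly flag the semi-infinite subtlety and resolve it the right way, since triangularity makes the $(n+1)\times(n+1)$ truncation of $\lmm\dmm\umm$ equal the product of the truncations, so finite-dimensional $LDU$ uniqueness pins the entries row by row. What each approach buys: yours handles $P$ and $Q$ symmetrically in one stroke, uses nothing about the recurrences \eqref{L} and \eqref{eq_umat} beyond their defining the factors, and delivers Corollary \ref{cor_pold} ($P_n=\sum_k \limm_{n,k}\db^k$, and its $Q$ analogue) as an immediate byproduct rather than a separate statement; indeed you do not even need the evaluation $\Lam_n=\dmm_n$ from Theorem \ref{thm_othog} as input, since uniqueness identifies the diagonal factors too, so that equality falls out of your argument for free. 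The paper's route is more computational but flows directly from the explicit determinant formulas already on hand and never has to articulate a uniqueness lemma in the semi-infinite setting.
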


\begin{proof} Since the two matrices are very similar to the bimoment matrix thus, once the LDU decomposition of the the bimoment matrix is know that for \eqref{expolyp} and \eqref{expolyq} are readily obtained. Thus, assuming Conjecture 1, we get the following:
\begin{equation}\label{lduexpolyp}
P_n(\db)=\det\left(\begin{matrix}
\lmm_{0,0}		& 0 			&\hdots	& 0				& 1			\\
\lmm_{1,0}		& \lmm_{1,1} 	&\hdots	& 0				& \db			\\
\vdots			& \vdots 		&\ddots	& \vdots			& \vdots		\\
\lmm_{n-1,0}	& \lmm_{n-1,1}	&\hdots	& \lmm_{n-1,n-1}	& \db^{n-1}	\\
\lmm_{n,0}		& \lmm_{n,1}	&\hdots	& \lmm_{n,n-1}	& \db^n 
\end{matrix}\right)
\end{equation}
and
\begin{equation}\label{lduexpolyq} 
Q_n(\eb)=\det\left(\begin{matrix}
\umm_{0,0}	& \umm_{0,1}	& \hdots	& \umm_{0,n-1}	& \umm_{0,n}		\\
0			& \umm_{1,1}	& \hdots	& \umm_{1,n-1}	& \umm_{1,n}		\\
\vdots		& \vdots		& \ddots	& \vdots			& \vdots			\\
0			& 0				& \hdots	& \umm_{n-1,n-1}	& \umm_{n-1,n}	\\
1			& \eb				& \hdots	& \eb^{n-1}		& \eb^n 
\end{matrix}\right)\,.
\end{equation}

Expanding  \eqref{lduexpolyp} using  the bottom row leaves a sub-matrix determinant which reduces down to a $k \times k$ determinant of the same form   as \eqref{lduexpolyp}  but with $n=k$ and hence is   $P_k(d)$. Thus we get \eqref{eq_polexpnp}. Similarly for  \eqref{eq_polexpnq}.
\end{proof}

\begin{corollary} \label{cor_pold} \conj\ 
The  pair of sequences of monic polynomials   $\set{P_n(\db)}_{n\ge0}$ and $\set{Q_n(\eb)}_{n\ge0}$   can be expressed as 
\begin{subequations}\label{eq_explicitpol}
\begin{align}
	P_n(\db) &= \sum_{k=0}^{n} \limm_{n,k} \db^k \label{eq_expolp}\,,\\
	Q_n(\eb) &= \sum_{k=0}^{n} \eb^k \uimm_{k,n} \label{eq_expolq}\,,
\end{align}
\end{subequations}
where $\limm_{n,k}$ and $\uimm_{k,n}$ are the matrix elements of the inverse lower triangular $\limm$ and  inverse upper triangular $\uimm$    given by \eqref{linvrec} and \eqref{uinvrec} respectively.
\end{corollary}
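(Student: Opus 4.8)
The plan is to obtain the two expansions by inverting the change-of-basis relations of \thmref{lem_polrec}. Equations \eqref{eq_polexpn} express each monomial in terms of the monic polynomial sequences: reading \eqref{eq_polexpnp} as a matrix–vector identity, the column of monomials $(\db^0,\db^1,\dots)^{\mathsf T}$ is $\lmm$ applied to the column of polynomials $(P_0,P_1,\dots)^{\mathsf T}$, while \eqref{eq_polexpnq} says the row of monomials $(\eb^0,\eb^1,\dots)$ is the row of polynomials $(Q_0,Q_1,\dots)$ applied to $\umm$ from the left. The target formulas \eqref{eq_explicitpol} are precisely the inverse statements, so the whole task reduces to multiplying through by $\limm=\lmm^{-1}$ on the left and by $\uimm=\umm^{-1}$ on the right.

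First I would record that $\lmm$ is \emph{unit} lower triangular. Evaluating the recurrence \eqref{L} on the diagonal $i=j$ annihilates the two terms $\lmm_{i-1,i}$ and $\lmm_{i-1,i+1}$ (both lie strictly above the diagonal of a lower-triangular matrix and hence vanish), leaving $\lmm_{i,i}=\lmm_{i-1,i-1}=\cdots=\lmm_{0,0}=1$; the symmetric argument gives $\umm_{i,i}=1$. Consequently both matrices are invertible over $\R$, their inverses are again unit triangular, and Corollary~\ref{tm_trirecrel} has already identified these inverses as the matrices $\limm$ and $\uimm$ of \eqref{linvrec} and \eqref{uinvrec}, together with the relations $\limm\lmm=\mathbf 1$ and $\umm\uimm=\mathbf 1$.

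The computation for $P_n$ then runs as follows. Multiplying \eqref{eq_polexpnp} by $\limm_{m,n}$ and summing over $n$ gives $\sum_n \limm_{m,n}\db^n=\sum_{n,k}\limm_{m,n}\lmm_{n,k}P_k=\sum_k(\limm\lmm)_{m,k}P_k=P_m$, which is \eqref{eq_expolp} after renaming indices. Because $\limm$ is lower triangular the sum truncates to $k\le n$, matching the stated range and confirming the right-hand side has degree $n$. The identity for $Q_n$ is obtained symmetrically: summing $\uimm_{n,m}$ against \eqref{eq_polexpnq} and collapsing $\umm\uimm=\mathbf 1$ yields $Q_m(\eb)=\sum_{n}\eb^n\uimm_{n,m}$, the upper-triangularity of $\uimm$ again restricting the sum to $k\le n$ after relabelling.

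I do not expect a genuine obstacle, since the statement is a formal inversion of an already-established triangular change of basis; the only points requiring care are \textbf{(i)} checking the unit-diagonal property so that the inverses exist, and \textbf{(ii)} keeping the $P$-relations as left multiplications by $\limm$ and the $Q$-relations as right multiplications by $\uimm$, since the two polynomial families sit on opposite sides of the bimoment pairing. As with the preceding corollaries, the argument is conditional on Conjecture 1 through its reliance on \thmref{lem_polrec}.
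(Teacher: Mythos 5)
Your proposal is correct and is precisely the argument the paper intends: Corollary \ref{cor_pold} is stated without proof because it is the formal inversion of the triangular change of basis in \thmref{lem_polrec}, using the inverses $\limm$ and $\uimm$ already established in Corollary \ref{tm_trirecrel}. Your added care about the unit diagonals (hence invertibility) and about keeping $\limm$ acting on the left for the $P_n$ and $\uimm$ on the right for the $Q_n$ fills in exactly the details the paper leaves implicit.
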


We now use \eqref{eq_expolp} and \eqref{eq_expolq} to find a recursion formulation for $P_n$ and $Q_n$.
\begin{theorem}\label{tm_biorthpoys}\conj\ 
The  pair of sequences of monic polynomials $\set{P_n(\db)}_{n\ge0}$ and $\set{Q_n(\eb)}_{n\ge0}$ are given by the three-term recurrences 
\begin{align}
	\db P_{n}(\db) &= P_{n+1}(\db) + d_n^\natural\, P_{n}(\db) - bdq^{n-1}g_{n-1}\, P_{n-1}(\db)
\intertext{and}
	\eb Q_{n}(\eb) &= Q_{n+1}(\eb) + e_n^\natural\, Q_{n}(\eb) - acq^{n-1}g_{n-1}\, Q_{n-1}(\eb)   \label{eq_eqrec}
\end{align}
with $P_0=Q_0=1$, $P_{-1} = Q_{-1} = 0$ and the coefficients given by \eqref{eq_awparam}.
\end{theorem}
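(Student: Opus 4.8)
The plan is to read both recurrences straight off the explicit expansions of $P_n$ and $Q_n$ in Corollary \ref{cor_pold}, reducing each to a single coefficient identity which turns out to be exactly the recurrence for the relevant inverse triangular matrix. Starting from \eqref{eq_expolp}, $P_n(\db)=\sum_{k=0}^{n}\limm_{n,k}\db^k$, I would compute
\[
\db\,P_n(\db)=\sum_{k=0}^{n}\limm_{n,k}\db^{k+1}=\sum_{k=1}^{n+1}\limm_{n,k-1}\db^k ,
\]
and then expand the conjectured right-hand side $P_{n+1}+d_n^\natural P_n-bdq^{n-1}g_{n-1}P_{n-1}$ using \eqref{eq_expolp} again. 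Collecting the coefficient of $\db^k$, the equality to be established becomes
\[
\limm_{n,k-1}=\limm_{n+1,k}+d_n^\natural\,\limm_{n,k}-bdq^{n-1}g_{n-1}\,\limm_{n-1,k},\qquad 0\le k\le n+1 .
\]
Because the monomials $\db^k$ are distinct basis elements of $\mmod$ and hence $\R$-linearly independent, comparing coefficients is legitimate; and this identity is precisely the inverse recurrence \eqref{linvrec} with $i=n+1$, $j=k$, merely rearranged. So the $P_n$ recurrence is immediate once Conjecture \ref{eq_conj} (and hence Corollary \ref{cor_pold}) is granted.

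The $Q_n$ recurrence \eqref{eq_eqrec} follows by the same mechanism on the other side. From \eqref{eq_expolq}, $Q_n(\eb)=\sum_{k=0}^{n}\eb^k\uimm_{k,n}$, one has $\eb\,Q_n(\eb)=\sum_{k=1}^{n+1}\eb^k\uimm_{k-1,n}$, and expanding $Q_{n+1}+e_n^\natural Q_n-acq^{n-1}g_{n-1}Q_{n-1}$ reduces matching the coefficient of $\eb^k$ to
\[
\uimm_{k-1,n}=\uimm_{k,n+1}+e_n^\natural\,\uimm_{k,n}-acq^{n-1}g_{n-1}\,\uimm_{k,n-1},
\]
which is the inverse recurrence \eqref{uinvrec} with $i=k$, $j=n+1$. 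This mirrors the transpose/parameter symmetry ($a\leftrightarrow b$, $c\leftrightarrow d$) already used to pass from $\lmm$ to $\umm$, so the two cases need not be argued independently.

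What remains is only bookkeeping at the edges of the index range, which I would verify explicitly. At $k=n+1$ the coefficient identity reads $1=1$ (using $\limm_{m,m}=1$, which follows from \eqref{linvrec} since the two correction terms vanish on the diagonal), confirming monicity. At $k=0$ both sides vanish, again directly from \eqref{linvrec}. The initial data $P_0=Q_0=1$ come from $\limm_{0,0}=\uimm_{0,0}=1$, and the conventions $P_{-1}=Q_{-1}=0$ make the $n=0$ instance consistent despite $g_{-1}$ being undefined, since that term is multiplied by $P_{-1}$ (resp.\ $Q_{-1}$) and drops out. I do not expect a genuine obstacle here: the real content sits in Conjecture \ref{eq_conj} and the derivation of the inverse recurrences in Corollary \ref{tm_trirecrel}, and the substance of this theorem is simply the observation that those inverse recurrences have exactly three terms, which is what forces multiplication by $\db$ (resp.\ $\eb$) to act tridiagonally in the $\{P_k\}$ (resp.\ $\{Q_k\}$) basis. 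The only care required is aligning the index shifts so that \eqref{linvrec} and \eqref{uinvrec} are applied within their stated ranges $i\ge j$ and $i\le j$.
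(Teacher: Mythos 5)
Your proposal is correct and takes essentially the same route as the paper: the paper's proof likewise multiplies the expansions of Corollary \ref{cor_pold} by $\db$ (resp.\ $\eb$) and substitutes the rearranged inverse recurrence \eqref{linvrec} (resp.\ \eqref{uinvrec}), which is exactly your coefficient identity with $i=n+1$, $j=k$. Your explicit bookkeeping at $k=0$ and $k=n+1$ and the appeal to linear independence of the monomials merely spell out steps the paper leaves implicit.
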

 \begin{proof}
We prove the recurrence relation of the bi-orthogonal polynomials by using the recurrence relations \eqref{linvrec} and \eqref{uinvrec} satisfied by the inverses of the upper and lower triangular matrix elements respectively.
Multiplying \eqref{eq_expolp} by $\db$ we get 
\begin{align}
\db P_n(\db) &= \sum_k \limm_{n,k} \db^{k+1}\\
&= \sum_k (\lmm_{n+1,k+1}^{-1} + d_{n}^\natural \limm_{n,k+1} - bdq^{n-1}g_{n-1} \limm_{n-1,k+1})\db^{k+1}\\
&= P_{n+1}(\db) + d_{n}^\natural P_n(\db) - bdq^{n-1}g_{n-1} P_{n-1}(\db)\,.
\end{align}
The proof of \eqref{eq_eqrec} follows similarly. 
\end{proof}

  \section{Matrix representation and the boundary basis}  

The recurrence relations for $P_n$ and $Q_n$  in \thmref{tm_biorthpoys}  can be used   to compute the following two moments (which lead to a matrix representation of $\db$ and $\eb$).  This gives the following theorem.

\begin{theorem} \label{thm_fmom}\conj\ 
Let $P_n$ and $Q_n$ be the polynomials of  \thmref{tm_biorthpoys}. The two first moments  
\begin{subequations}
\begin{align}
X_{n,m}&=\lf(P_n\, \db\, Q_m), \\
Y_{n,m}&=\lf(P_n \,\eb\, Q_m), 
\end{align}	
\end{subequations}
for $n,m\ge0$, are given by
\begin{subequations}
\begin{align}\label{fmmat}
X_{n,m}&= \Lam_{n+1} \delta_{n+1,m}+ d_n^{\natural} \Lam_{n}\delta_{n,m} - bdq^{n-1} \Lam_{n} \delta_{n-1,m}\,,  
 \\
Y_{n,m}&= \Lam_{m+1} \delta_{n,m+1}+ e_m^{\natural} \Lam_{m}\delta_{n,m} - acq^{m-1} \Lam_{m} \delta_{n,m-1}\, ,  
\end{align}	
\end{subequations}
for $n,m\ge 0$. 
\end{theorem}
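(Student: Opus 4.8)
The plan is to reduce both moments to the bi-orthogonality relation \eqref{bi-orthog} by transferring the single operator $\db$ (respectively $\eb$) onto an adjacent polynomial via the three-term recurrences of \thmref{tm_biorthpoys}. Since $P_n(\db)$ is a polynomial in $\db$ alone, it commutes with $\db$, so that $\lf(P_n\,\db\,Q_m)=\lf\bigl((\db P_n)\,Q_m\bigr)$; likewise $\eb$ commutes with $Q_m(\eb)$, giving $\lf(P_n\,\eb\,Q_m)=\lf\bigl(P_n\,(\eb Q_m)\bigr)$. This commutation is the only structural observation needed; everything else is substitution followed by index bookkeeping.

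First I would treat $X_{n,m}$. Applying the recurrence $\db P_n = P_{n+1}+d_n^\natural P_n - bdq^{n-1}g_{n-1}P_{n-1}$ and then the linearity of $\lf$ gives
\begin{equation*}
X_{n,m}=\lf(P_{n+1}Q_m)+d_n^\natural\,\lf(P_nQ_m)-bdq^{n-1}g_{n-1}\,\lf(P_{n-1}Q_m).
\end{equation*}
Each term collapses by \eqref{bi-orthog} to $\Lam_{n+1}\delta_{n+1,m}+d_n^\natural\Lam_n\delta_{n,m}-bdq^{n-1}g_{n-1}\Lam_{n-1}\delta_{n-1,m}$. The only apparent discrepancy with the claimed formula lies in the last term, where the factor $g_{n-1}\Lam_{n-1}$ appears in place of $\Lam_n$; these agree because $\Lam_n=\dmm_n=\prod_{i=0}^{n-1}g_i$ (from \eqref{eq_lamn} together with \eqref{eq_bmmdiage}) yields $\Lam_n=g_{n-1}\Lam_{n-1}$. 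Substituting this identity produces exactly \eqref{fmmat}.

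The computation for $Y_{n,m}$ is the mirror image: applying $\eb Q_m=Q_{m+1}+e_m^\natural Q_m-acq^{m-1}g_{m-1}Q_{m-1}$ and bi-orthogonality gives $\Lam_{m+1}\delta_{n,m+1}+e_m^\natural\Lam_m\delta_{n,m}-acq^{m-1}g_{m-1}\Lam_{m-1}\delta_{n,m-1}$, and the same relation $g_{m-1}\Lam_{m-1}=\Lam_m$ converts the final coefficient into $acq^{m-1}\Lam_m$. There is essentially no obstacle here beyond care at the boundary: for $n=0$ (or $m=0$) the terms carrying $P_{-1}$ or $Q_{-1}$ must vanish, which they do by the conventions $P_{-1}=Q_{-1}=0$ of \thmref{tm_biorthpoys}. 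Moreover, since the potentially troublesome factors $g_{n-1}\Lam_{n-1}$ (resp. $g_{m-1}\Lam_{m-1}$) always multiply $\delta_{n-1,m}$ (resp. $\delta_{n,m-1}$), they are only activated when the relevant index is nonnegative, so no ill-defined quantity is ever evaluated. The whole argument is therefore short; the single point worth flagging explicitly is the replacement of $g_{n-1}\Lam_{n-1}$ by $\Lam_n$, which is what brings the raw output of the recurrence into the stated closed form.
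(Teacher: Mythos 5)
Your proof is correct and follows exactly the route the paper intends: the lead-in to Theorem~\ref{thm_fmom} says the three-term recurrences of Theorem~\ref{tm_biorthpoys} are used, and your computation---apply $\db P_n = P_{n+1}+d_n^\natural P_n - bdq^{n-1}g_{n-1}P_{n-1}$ (resp.\ the $\eb Q_m$ recurrence), collapse via the bi-orthogonality $\lf(P_kQ_m)=\Lam_k\delta_{k,m}$, and rewrite $g_{n-1}\Lam_{n-1}=\Lam_n$ using $\Lam_n=\dmm_n=\prod_{i=0}^{n-1}g_i$---is precisely that argument, with the boundary convention $P_{-1}=Q_{-1}=0$ correctly handled. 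No gaps to flag.
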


To obtain a representation we need to use the orthonormal (non-monic) versions of the polynomials:
\begin{equation}
	\nP_n(\db) = P_n(\db)\frac{1}{\sqrt{\Lam_n}}\qquad\text{and}\qquad \nQ_m(\eb)  =\frac{1}{\sqrt{\Lam_n}}Q_n(\eb) \,.
\end{equation}

 \begin{theorem} \label{tm_matprod}\conj\ 
 	 The  matrices $\db$ and $\eb$ with matrix elements
\begin{subequations}\label{eq_matrefdssd}
\begin{align}
	\db_{n,m}&=\lf(\nP_n \,\db\, \nQ_m)=X_{n,m}/\sqrt{\Lam_n\Lam_m}\,, \\
	\eb_{n,m}&=\lf(\nP_n\, \eb \,\nQ_m)=Y_{n,m}/\sqrt{\Lam_n\Lam_m}\,,, 
\end{align}
\end{subequations}
for $n,m\ge0$, give a matrix representation of  \eqref{eq_cw}.
\end{theorem}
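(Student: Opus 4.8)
The plan is to take the boundary vectors to be the first vectors of the boundary basis, $\bra W=\bra{\hat W_0}$ and $\ket V=\ket{\hat V_0}$ (the unit coordinate vectors of the new basis), and to check that the tridiagonal matrices of \thmref{thm_fmom} together with these vectors satisfy the three defining relations of \defref{thm_cw}: the bulk commutation $\db\eb-q\,\eb\db=q'\mathbf 1$ and the two boundary conditions \eqref{eq_vbdry} and \eqref{eq_wbdry}. The bridge between the matrices and the functional is that $\bra{\hat W_n}=\bra W\nP_n(\db)$ and $\ket{\hat V_m}=\nQ_m(\eb)\ket V$, so that $\db_{n,m}=\lf(\nP_n\,\db\,\nQ_m)$ and $\eb_{n,m}=\lf(\nP_n\,\eb\,\nQ_m)$, while $\lf(\nP_n\nQ_m)=\delta_{n,m}$ by the bi-orthonormality coming from \thmref{thm_othog}.

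I would dispose of the boundary conditions first, as they need nothing beyond the definition of $\lf$. The $(n,0)$ entry of $\db+bd\,\eb-(b+d)\mathbf 1$ is $\lf(\nP_n(\db+bd\,\eb-(b+d)\mathbf 1)\nQ_0)=\lf(\nP_n(\db+bd\,\eb-(b+d)\mathbf 1))$ since $\nQ_0=1$; expanding $\nP_n$ as a linear combination of monomials $\db^j$ and applying \eqref{eq_cw_b} to each term shows the whole $0$th column is zero, i.e. $(\db+bd\,\eb-(b+d)\mathbf 1)\ket{\hat V_0}=0$, which is \eqref{eq_vbdry}. Dually, the $0$th row of $\eb+ac\,\db-(a+c)\mathbf 1$ vanishes by \eqref{eq_cw_c}, giving \eqref{eq_wbdry}.

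For the bulk relation I would invoke the resolution of identity $\sum_k\ket{\hat V_k}\bra{\hat W_k}=\mathbf 1$ to identify matrix products with operator products: inserting it between the two factors yields $(\db\eb-q\,\eb\db)_{n,m}=\lf(\nP_n(\db\eb-q\,\eb\db)\nQ_m)$, at which point \eqref{eq_cw_a} (taken with $u=\nP_n$ and $v=\nQ_m$ and extended by linearity) collapses the right-hand side to $q'\lf(\nP_n\nQ_m)=q'\delta_{n,m}$, so $\db\eb-q\,\eb\db=q'\mathbf 1$. The same insertion gives $\bra{\hat W_0}\,w(\db,\eb)\,\ket{\hat V_0}=\lf(\nP_0\,w\,\nQ_0)=\lf(w)$ for every word $w$, so the functional attached to the representation is $\lf$ itself; combined with the bulk and boundary relations this exhibits the matrices as a representation of \defref{thm_cw}.

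The one step that is not purely formal, and where I expect the real difficulty, is the resolution of identity. Bi-orthonormality by itself only makes $\{\ket{\hat V_k}\}$ and $\{\bra{\hat W_k}\}$ a dual pair; to get $\sum_k\ket{\hat V_k}\bra{\hat W_k}=\mathbf 1$ one needs completeness of the boundary basis, which follows from the unitriangular change of basis already available under Conjecture~1 --- the expansions $\db^n=\sum_k\lmm_{n,k}P_k(\db)$ and $\eb^n=\sum_k Q_k(\eb)\umm_{k,n}$ of \thmref{lem_polrec} have invertible, unit-diagonal coefficient matrices, so the boundary polynomials span. In fact the insertion is only subtle for the anti-normal-ordered product $\eb\db$; the normal-ordered product $\db\eb$ already reduces to a finite sum by bi-orthonormality alone, since $\db\,\nP_n$ and $\eb\,\nQ_m$ expand in finitely many $\nP$'s and $\nQ$'s by the recurrences of \thmref{tm_biorthpoys}, and since $\db,\eb$ are tridiagonal all sums in sight are finite. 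If one prefers to sidestep completeness altogether, the bulk relation can be verified directly from the explicit entries $\db_{n,n+1}=\sqrt{g_n}$, $\db_{n,n}=d_n^\natural$, $\db_{n,n-1}=-bd\,q^{n-1}\sqrt{g_{n-1}}$ (and the analogous $\eb$ entries): the $(n,n\pm2)$ diagonals of $\db\eb-q\,\eb\db$ cancel at once, and the $(n,n)$ and $(n,n\pm1)$ diagonals reduce to algebraic identities among $g_n$, $d_n^\natural$ and $e_n^\natural$ that follow from \eqref{eq_awparam}. This brute-force route is the verification indicated in the remark after Conjecture~1, and its cost is exactly these Askey--Wilson coefficient identities.
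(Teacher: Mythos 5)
Your proposal is correct, but it takes a genuinely different route from the paper. The paper's entire proof is the single remark that the theorem ``is proved by direct verification that the matrices \eqref{eq_matrefdssd} satisfy the quotient relation $\db\,\eb-q\,\eb\,\db=q'\mathbf{1}$'' --- i.e., exactly the brute-force computation with the explicit tridiagonal entries that you relegate to a fallback (the instant $(n,n\pm2)$ cancellations plus the Askey--Wilson coefficient identities among $g_n$, $d_n^\natural$, $e_n^\natural$ on the remaining diagonals). Your primary argument is instead structural: you check the boundary relations column- and row-wise straight from \eqref{eq_cw_b} and \eqref{eq_cw_c}, and you obtain the bulk relation by identifying matrix products with $\lf$-values, correctly isolating the one non-formal step --- that the insertion $\sum_k\ket{\hat V_k}\bra{\hat W_k}=\mathbf 1$ is only needed for the anti-normally-ordered $\eb\,\db$ (since $\db\,\nQ_m(\eb)\ket V$ is not a polynomial in $\eb$ until the boundary relation eliminates the $\db$), while $\db\,\eb$ reduces by bi-orthonormality and the three-term recurrences alone, with all sums finite by tridiagonality; completeness then does follow, as you say, from the unitriangular expansions of \thmref{lem_polrec} together with the nondegeneracy supplied by Conjecture 1 (already a hypothesis of the theorem, so nothing extra is assumed). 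Your route buys two things the paper's one-liner omits: it covers the boundary equations, which the theorem also claims since a representation of \eqref{eq_cw} comprises all three relations, and it yields $\bra{\hat W_0}\,w\,\ket{\hat V_0}=\lf(w)$ for every word $w$, i.e., that the representation reproduces the stationary-state functional --- in keeping with the paper's stated aim of \emph{deriving} rather than merely verifying the representation. What the paper's verification buys in exchange is self-containedness: it needs no completeness discussion, only the explicit \eqref{eq_awparam} identities.
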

The theorem is proved by direct verification that the matrices \eqref{eq_matrefdssd} satisfy the quotient relation $\mathbf{\db  \eb} - q\, \mathbf{\eb \db} = q'\mathbf{1}$.

Using \eqref{fmmat} we see that $\db$ and $\eb$ have a  tri-diagonal structure
\begin{subequations}\label{eq_matrep}
\begin{equation}\label{eq_matrepd}
\db=\left(\begin{matrix}
d_0^\natural & \sqrt{g_0} & 0 & \hdots \\
-bd\sqrt{g_0} & d_1^\natural & \sqrt{g_1} & \hdots \\
0 & -bdq\sqrt{g_1} & d_2^\natural &  \hdots \\ 
\vdots & \vdots & \vdots & \ddots\\
\end{matrix}\right)
\end{equation}
and
\begin{equation}\label{eq_matrepe}
\eb=\left(\begin{matrix}
e_0^\natural & -ac\sqrt{g_0} & 0 & \hdots \\
\sqrt{g_0} & e_1^\natural & -acq\sqrt{g_1} & \hdots \\
0 & \sqrt{g_1} & e_2^\natural &  \hdots \\ 
\vdots & \vdots & \vdots & \ddots\\
\end{matrix}\right)\,.
\end{equation}
\end{subequations}
These matrices are similar to those obtained by Sasamoto \cite{Sasamoto:1999aa}.
 \newcommand{\wmat}{\mathbf{R}}
Clearly the sum, $\wmat= \db + \eb$,   also has a tri-diagonal form,
\begin{equation}\label{eq_edtri}
  \db + \eb =\left(\begin{matrix}
d_0^\natural + e_0^\natural & (1-ac)\sqrt{g_0} & 0 & \hdots \\
(1-bd)\sqrt{g_0} & d_1^\natural + e_1^\natural & (1-acq)\sqrt{g_1} & \hdots \\
0 & (1-bdq)\sqrt{g_1} & d_2^\natural +e_2^\natural &  \hdots \\ 
\vdots & \vdots & \vdots & \ddots\\
\end{matrix}\right)\,,  
\end{equation}
and thus $\wmat$ defines a sequence of orthogonal polynomials,  $\set{T_n(x)}_{n\ge0}$, via the three term recurrence relation obtained from the rows,
\begin{equation}
\wmat_{n,n-1} T_{n-1}+(\wmat_{n,n}-2x) T_n + \wmat_{n,n+1} T_{n+1}=0
\end{equation}
where
\begin{align}
\wmat_{n,n+1}&= (1 - acq^n) \sqrt{g_n}\,,\\
\wmat_{n,n}&=d_n^{\natural} + e_n^{\natural}\,,\\
\wmat_{n,n-1}&= (1 - bdq^{n-1}) \sqrt{g_{n-1}}
\end{align}	
and initial values $T_0=1$ and $ T_{-1}=0$. This three term recurrence is not exactly that for Askey-Wilson polynomials (cf.\ \eqref{eq_awrecrel}) however, the middle coefficient $\wmat_{n,n}$ is the same as the Askey-Wilson  middle recurrence coefficient  and the product of the first and last coefficients $\wmat_{n,n+1} \wmat_{n+1,n}$ is the same as the product  of the first and last coefficients of the the Askey-Wilson recurrence. 
This means that the polynomials $\set{T_n(x)}_{n\ge0}$ have the same moments as the Askey-Wilson polynomials.

\section{Concluding Remarks}

In many cases finding a matrix representation of an algebra (eg.\ the equations of \defref{thm_cw}) can be achieved by the `method of verification': conjecture the matrix elements and then prove they indeed form a representation by showing the requisite matrix sums and products satisfy the algebra. 
This is the apparent `method' used by Littlewood \cite{Littlewood:1933aa} in presenting a representation for the Weyl algebra ($xp-px=1$ -- closely related to \eqref{eq_cw_a})), in the original matrix product ASEP paper \cite{derrida97} (the most general representation is a four parameter representation -- $\alpha$, $\beta$, $\gamma$, $\delta$ with $q=1$) and in the  five parameter ASEP paper  \cite{UCHIYAMA:2003aa}. 

One of the motivations for the previous three parameter paper \cite{brakMore2015} was to try and find   a systematic algebraic method for computing the matrix elements of the representations of $\db$ and $\eb$. The method presented there essentially reduces the determination of the matrix elements of the representation of $\db$ and $\eb$ to the calculation of the determinant of the bi-moment matrix  \eqref{binom} (the determinant gives $\Lam_n$ and hence via \eqref{fmmat} and \eqref{eq_matrefdssd} the matrix elements).

In this paper we have applied the method of \cite{brakMore2015} to the five parameter  case. Going from three to five parameters has a \emph{dramatic} affect on the complexity of the calculations. 
This can be seen at several places. 
It begins  rather subtly, in that the boundary equations  \eqref{eq_cw_a} and \eqref{eq_cw_b}, are now a pair of \emph{coupled} functional equations which \emph{cannot} be simply solved as a pair of `simultaneous linear equations' (which would have made the calculation only a little more complex than the three parameter case) -- the actual method required is   more complex and detailed in the proof of \thmref{bimompascal}.

Another, more  significant  impact, is on the complexity of calculating the determinant of the bi-moment matrix. 
To compute a determinant one usually has the matrix elements explicitly, however, the matrix elements of the bi-moment matrix $\bmm_{i,j}$ are not given explicitly but indirectly via the partial $q$-difference equations stated in \thmref{bimompascal}. 
In the case of the three parameter model \cite{brakMore2015} the boundary matrix elements $\bmm_{i,0}$ and $\bmm_{i,0}$ (required to solve the $\bmm_{i,j}$  partial difference equation) \emph{are} given explicitly (as simple monomials), however in the five parameter model the boundary matrix elements are   given \emph{implicitly} by three term recurrence relations and hence are themselves  related to the value of yet another  set of non-trivial $q$-orthogonal polynomials. 
Thus, if one wanted to use a determinant evaluation method that required explicit expressions for the matrix elements  $\bmm_{i,j}$, one has to solve a partial $q$-difference equation whose boundary values are given implicitly as the values of $q$-orthogonal polynomials. 
Once these two tasks have been achieved one can then  try to compute the determinant. 

To circumvent the difficulty of computing the matrix elements explicitly we turned to  $LDU$ decomposition of the bi-moment matrix. This method somewhat mitigates the task of evaluating the   matrix elements  explicitly by translating the $\bmm_{i,j}$ partial difference equation into partial difference equations for the $L$ and $U$ matrix elements (our primary conjecture -- \eqref{eq_conj}). Given triangular $L$ and $U$  we get a diagonal bi-moment matrix. Once $\bmm$ is diagonalised   the determinant evaluation is straightforward -- see \eqref{eq_bmmdiage}. 

If  $\bmm$ is interpreted as a linear operator, $\bmm: V_1\to V_2$, between two infinite dimensional vectors spaces then clearly the matrix elements of $\bmm$ are determined by the choice of basis for $V_1$ and $V_2$. If the matrix elements of $\bmm$ are defined by \eqref{eq_bmmdef}  then clearly $\bmm$ is  \emph{not} diagonal in the basis implicitly   used for $V_1$ and $V_2$.

This brings us to the primary significance (for this calculation) of the set of  `boundary bases' vectors, $\bra{\hat{W}_n}_{n\ge0}$  and $\ket{\hat{V}_n}_{n\ge0}$  constructed in this paper (see \eqref{eq_orthobasis}). In this basis the  $U$ and $L$ matrices are triangular and hence by choosing the boundary basis   the linear map $\bmm$ has a diagonal matrix representation and hence its determinant becomes a product of the diagonal elements \eqref{eq_bmmdiage}. 
 The `boundary bases' vectors have the further significance in that it is in this basis that the matrix representation of $\db+\eb$ is tri-diagonal and hence defines a three term recurrence relation (see \eqref{eq_edtri}). 
 It is this recurrence relation that is related to the Askey-Wilson orthogonal polynomials. 
 Whilst the  Askey-Wilson polynomials (or their moments) do not appear to have any particular physical significance it is an open question as to whether or not the boundary basis vectors have any physical interpretation.

It would be very interesting to determine if the method of this paper and the new basis it defines has implications (eg.\ new representations) for other ASEP (or similar)  related work associated with finding representations such as the finite representations of    Mallick and  Sandow \cite{Mallick:1997aa},  Sandow \cite{Sandow:1994aa}, the MacDonald and Koornwinder polynomials that appear in Cantini  et.\ al.\ \cite{Cantini:2015aa} and Finn and Vanicat \cite{Finn:2017aa} as well as  the connection to Schur polynomails that appear in Crampe et.\ al. \cite{Crampe:2015aa}.


%
\section{Acknowledgement} We would like to   thank the Australian Research Council (ARC) and the Centre of Excellence for Mathematics and Statistics of Complex Systems (MASCOS) for financial support and for useful comments from the referees. We would also like to thanks Prof.\ C.\ Krattenthaler, Prof.\ J.\ de Gier, Dr.\ M.\ Wheeler and Dr.\ N.\ Witte for many useful discussions.

\newpage
\bibliographystyle{unsrt}
\bibliography{fiveParamBiblio}

\begin{thebibliography}{10}

\bibitem{derrida97}
B.~Derrida, M.~Evans, V.~Hakin, and V.~Pasquier.
\newblock Exact solution of a 1d asymmetric exclusion model using a matrix
  formulation.
\newblock {\em J. Phys. A: Math. Gen.}, 26:1493 -- 1517, 1993.

\bibitem{brakMore2015}
R~Brak and W~Moore.
\newblock Bi-orthogonal polynomial sequences and the asymmetric simple
  exclusion process.
\newblock {\em Journal of Physics A: Mathematical and Theoretical},
  48(31):315205, 2015.

\bibitem{gasper90}
G.~Gasper and M.~Rahman.
\newblock {\em Basic Hypergeometric Series}.
\newblock Camb. Univ., 1990.

\bibitem{Lazarescu:2014ab}
L.~Alexandre and P.~Vincent.
\newblock Bethe ansatz and {Q-operator} for the open {ASEP}.
\newblock {\em Journal of Physics A: Mathematical and Theoretical},
  47(29):295202, 2014.

\bibitem{Littlewood:1933aa}
D.~E. Littlewood.
\newblock On the classification of algebras.
\newblock {\em Proc. London Math. Soc.}, s2-35(1):200--240, 1933.

\bibitem{UCHIYAMA:2003aa}
M.~Uchiyama, T.~Sasamoto, and M.~Wadati.
\newblock Asymmetric simple exclusion process with open boundaries and
  {A}skey-{W}ilson polynomials.
\newblock arxive, 2003.

\bibitem{Brak2004vf}
R~Brak and J~W Essam.
\newblock Asymmetric exclusion model and weighted lattice paths.
\newblock {\em J. Phys. A: Math. Gen.}, 37:4183--4217, 2004.

\bibitem{askeywilson}
R.~A. Askey and J.~A. Wilson.
\newblock Some basic hypergeometric orthogonal polynomials that generalise
  jacobi polynomials.
\newblock {\em Mem. Am. Math. Soc.}, 54(319), 1985.

\bibitem{gasperrahman}
G.~Gasper and M.~Rahman.
\newblock {\em Basic hypergeometric series}.
\newblock Cambridge University Press, Cambridge, second edition, 2004.

\bibitem{koekoekleskyswarttouw}
Lesky P.~A. Koekoek, R. and Swarttouw.
\newblock {\em Hypergeometric Orthogonal Polynomials and their q-Analogues}.
\newblock Springer Monographs in Mathematics, 2010.

\bibitem{Crampe:2014ab}
N.~{Crampe}, E.~{Ragoucy}, and M.~{Vanicat}.
\newblock {Integrable approach to simple exclusion processes with boundaries.
  Review and progress}.
\newblock {\em Journal of Statistical Mechanics: Theory and Experiment}, 11:32,
  November 2014.

\bibitem{corteel:2010rt}
S.~Corteel and L.~K. Williams.
\newblock Tableaux combinatorics for the asymmetric exclusion process and
  {A}skey-{W}ilson polynomials.
\newblock {\em Duke Mathematical Journal}, 159(3):385--415, 2011.

\bibitem{Krattenthaler:2002aa}
C.~Krattenthaler.
\newblock Evaluations of some determinants of matrices related to the pascal
  triangle.
\newblock {\em Seminaire Lotharingien de Combinatoire}, B47, 2002.

\bibitem{A.R.-Moghaddamfar:2008aa}
A.R. Moghaddamfar, S.~Navid Salehy, and S.~Nima Salehy.
\newblock The determinants of matrices with recursive entries.
\newblock {\em Linear Algebra and its Applications}, 428:2468--2481, 2008.

\bibitem{Sasamoto:1999aa}
T.~Sasamoto.
\newblock One-dimensional partially asymmetric simple exclusion process with
  open boundaries: orthogonal polynomials approach.
\newblock {\em J. Phys. A: Math. Gen.}, 32:7109--7131, 1999.

\bibitem{Mallick:1997aa}
K.~{Mallick} and S.~{Sandow}.
\newblock {Finite-dimensional representations of the quadratic algebra:
  Applications to the exclusion process}.
\newblock {\em Journal of Physics A Mathematical General}, 30:4513--4526, July
  1997.

\bibitem{Sandow:1994aa}
S.~{Sandow}.
\newblock {Partially asymmetric exclusion process with open boundaries}.
\newblock {\em Phys. Rev. E}, 50:2660--2667, October 1994.

\bibitem{Cantini:2015aa}
L.~{Cantini}, J.~{de Gier}, and M.~{Wheeler}.
\newblock {Matrix product formula for Macdonald polynomials}.
\newblock {\em Journal of Physics A Mathematical General}, 48:384001, September
  2015.

\bibitem{Finn:2017aa}
C.~{Finn} and M.~{Vanicat}.
\newblock {Matrix product construction for Koornwinder polynomials and
  fluctuations of the current in the open ASEP}.
\newblock {\em Journal of Statistical Mechanics: Theory and Experiment},
  2:023102, February 2017.

\bibitem{Crampe:2015aa}
N.~{Crampe}, K.~{Mallick}, E.~{Ragoucy}, and M.~{Vanicat}.
\newblock {Inhomogeneous discrete-time exclusion processes}.
\newblock {\em Journal of Physics A Mathematical General}, 48:484002, December
  2015.

\end{thebibliography}

\end{document}